\newtheorem{theorem}{Theorem}
\newtheorem{corollary}[theorem]{Corollary}
\newtheorem{lemma}[theorem]{Lemma}
\theoremstyle{remark}
\newtheorem{remark}[theorem]{Remark}
\theoremstyle{definition}
\newcommand{\N}{\mathbb{N}}
\newcommand{\Z}{\mathbb{Z}}
\newcommand{\F}{\mathbb{F}} 
\definecolor{darkgreen}{rgb}{0,.5,0}
\newcommand{\mc}{\multicolumn{1}{c}}
\newcounter{sideremark}
\newcommand{\probVEST}{\textsc{VEST}\xspace}
\newcommand{\probkSum}{\textsc{$k$-Sum}\xspace}
\newcommand{\probAtmkSumRep}{\textsc{At-Most-$k$-Sum with Repetitions and Target 1}\xspace}
\newcommand{\probUHS}{\textsc{Unique Hitting Set}\xspace}
\newcommand{\probkEC}{\textsc{$k$-Exact Cover}\xspace}
\newcommand{\probkProdRep}{\textsc{$k$-Product with Repetitions}\xspace} 
\newcommand{\probMkProdRep}{\textsc{Matrix $k$-Product with Repetitions}\xspace} 
\newcommand{\probMkProdZero}{\textsc{Matrix $k$-Product with Repetitions resulting to Zero Matrix}\xspace}
\newcommand{\probMkProdId}{\textsc{Matrix $k$-Product with Repetitions resulting to Identity Matrix}\xspace}
\newcommand{\pbDef}[3]{
\noindent
\begin{center}
\begin{boxedminipage}{1 \columnwidth}
\textsc{#1}\\[5pt]
\begin{tabularx}{\textwidth}{l X}
Input: & #2\\
Question: & #3
\end{tabularx}
\end{boxedminipage}
\end{center}
}
\title{Counting Vanishing Matrix-Vector Products\thanks{
C.B. and V.K. were supported by Austrian Science Fund (FWF, project Y1329),
K.S. was supported by DFG~Research~Group ADYN via grant DFG 411362735,
M.S. acknowledges support by the project ``Grant Schemes at CU'' (reg. no. CZ.02.2.69/0.0/0.0/19\_073/0016935) and GA\v{C}R grant 22-19073S.
}} 
\author{
Cornelius Brand
\\ \small{Algorithms and Complexity Group, TU Wien, Austria}
\and
Viktoriia Korchemna
\\ \small{Algorithms and Complexity Group, TU Wien, Austria}
\and
Michael Skotnica
\\ \small{Department of Applied Mathematics, Charles University, Prague, Czech Republic}
\and
Kirill Simonov
\\ \small{Hasso Plattner Institute, University of Potsdam, Germany}}
\date{}
\begin{document}

\maketitle

\begin{abstract}
Consider the following parameterized counting variation of the classic subset sum problem,
which arises notably in the context of higher homotopy groups of topological spaces:
Let $\mathbf{v} \in \mathbb{Q}^d$ be a rational vector, 
$(T_{1}, T_{2} \ldots T_{m})$ a list of $d \times d$ rational matrices,
$S \in \mathbb{Q}^{h \times d}$ a rational matrix
not necessarily square and $k$ a parameter. The goal is to compute
the number of ways one can choose $k$ matrices $T_{i_1}, T_{i_2}, \ldots, T_{i_k}$ from the list
such that $ST_{i_k} \cdots T_{i_1}\mathbf{v} = \mathbf{0} \in \mathbb{Q}^h$.

In this paper, we show that this problem is $\# \W[2]$-hard for parameter $k$.
As a consequence, computing the $k$-th homotopy group of a $d$-dimensional 1-connected topological space
for $d > 3$ is $\# \W[2]$-hard for parameter $k$.
We also discuss a decision version of the problem and its several modifications for which we show $\W[1]/\W[2]$-hardness. This is in contrast to the parameterized $k$-sum problem, which is only $\W[1]$-hard (Abboud-Lewi-Williams, ESA'14). 
In addition, we show that the decision version of the problem without parameter is an undecidable problem,
and we give a fixed-parameter tractable algorithm for matrices of bounded size over finite fields, parameterized the matrix dimensions and the order of the field.
\end{abstract}
\section{Introduction}
Topology is one of the most important and active areas of mathematics, emerging from vast generalizations of geometry (see, e.g.,~\cite{flegg2001geometry} for a gentle introduction along this path).
In full generality, it studies fundamental properties of \emph{topological spaces},
which generalize a broad array of geometric objects (including manifolds, Hilbert spaces, algebraic varieties and even embeddings of graphs).
The concept of a topological space allows to speak in a very general manner about the ``shape'' of a space,
and a prime goal of topology consists in classifying spaces according to their shapes.
For instance, it is intuitively obvious that a mug with a handle and a football should belong to distinct classes of shapes, for instance because one has a hole in it and the other, preferably, does not. Whether or not, then, a mug with sharp edges and a doughnut should belong to the same class is a different question, and good reasons exist for choosing either way of answering it.

Thus, clearly, any such classification depends on the precise way in which the classes are defined and the structures provided on top of purely topological information (such as differential information, i.e., about ``sharp edges''); one particularly important way of doing so is to make a single class out of all those shapes that can be deformed into each other according to specific rules retaining.
The usual notion of equivalence under deformation of shapes corresponding to general topological spaces is furnished by \emph{homotopy},
which, very roughly speaking, identifies any two shapes that can be obtained from one another through arbitrary deformations without ``tearing'' or ``cutting'' (and hence identifying the mug with the doughnut, while differentiating both from the football).
 
Associated to this notion are the so-called \emph{homotopy groups} of a topological space,
denoted $\pi_k$, for $k \geq 1.$ 
The most intuitive of them is the group $\pi_1$, which is often called the \emph{fundamental group} of the space. 
It captures certain data about the different ways that loops (that is, closed curves in the space) can pass through the space.
The higher homotopy groups ($k > 1$) correspond to ways of routing higher-dimensional ``loops'' in the space,
and Whitehead's Theorem provides a crucial equivalence between the structure of homotopy groups and the homotopy class of a broad category of topological spaces called CW-complexes \cite{whitehead1949combinatorial,whitehead1949combinatorial2}.
The present paper deals with an intermediate problem related to the computation of homotopy groups, which allows to show lower bounds for the complexity of computing the higher homotopy groups of a topological space.

Before speaking about computational tasks associated with topological spaces,
one needs to define how a topological space is even represented.
While the generality of the concept may make it seem hard to come up with such a representation in general, the usual path taken in computational topology is as follows:
Many topological spaces can be described by finite structures, e.g., by abstract simplicial complexes, which are simply collections of point sets closed under taking subsets, and it hence suffices to provide the maximal subsets of a simplicial complex to specify it in full.
Such structure can then be used as an input for a computer and therefore,
it is natural to ask how hard it is to compute these homotopy groups of a given topological space, represented by an abstract simplicial complex.

Novikov in 1955~\cite{Novikov55} and independently Boone in 1959~\cite{Boone59}
showed undecidability of the word problem for groups.
Their result also implies undecidability of computing the fundamental group.
In fact, even determining whether the fundamental group of a given topological space is trivial
is undecidable.

On the other hand, for 1-connected spaces (for those, whose $\pi_1$ is trivial) it is known that their $\pi_k$ for $k > 1$
are finitely generated abelians group which are always isomorphic to groups of the form
$  \Z^n \oplus \Z_{p_1} \oplus \Z_{p_2} \oplus \cdots \oplus\Z_{p_m},$
where $p_1, \ldots, p_m$ are powers of prime numbers.%
\footnote{Note that $\Z^n$ is a direct sum of $n$ copies of $\Z$ while $\Z_{p_i}$ is a finite cyclic group of order $p_i$.}
An algorithm for computing $\pi_k$ of a 1-connected space, where $k > 1$, was first introduced by Brown in 1957~\cite{Brown57}.

In 1989, Anick~\cite{Anick89} proved that computing the rank of $\pi_k$,
that is, the number of direct summands isomorphic to $\Z$ (represented by $n$ in the expression above)
is $\#\P$-hard for 4-dimensional 1-connected spaces.\footnote{When $k$ is a part of the input and represented in unary.}
Another computational problem called \probVEST, which we define below, was used in Anick's proof as an intermediate step.
Briefly said, $\#\P$-hardness of \probVEST implies $\#\P$-hardness of computing the rank of $\pi_k$,
which is the motivation for studying the problem in the present article.

\paragraph*{Vector Evaluated After a Sequence of Transformations (\probVEST).}
The input of this problem defined by Anick~\cite{Anick89}
is a vector $\textbf{v} \in \mathbb{Q}^d$, a list $(T_1,\ldots, T_m)$
of rational $d \times d$ matrices and a rational matrix $S \in \mathbb{Q}^{h \times d}$
where $d,m,h \in \N$.

For an instance of \probVEST let an \emph{$M$-sequence} be a sequence of integers $M_1, M_2, M_3, \ldots$, where
\begin{align*}
  M_k := |\{(i_1,\ldots, i_k) \in \{1, \ldots, m\}^k; ST_{i_k}\cdots T_{i_1} \mathbf{v} = \mathbf{0}\}|.
\end{align*}
Given an instance of \probVEST and $k \in \N$, the goal is to compute $M_k$.

From an instance of \probVEST, it is possible to construct a corresponding algebraic structure called \emph{$123H$-algebra}
in polynomial time whose \emph{Tor-sequence} is equal to the $M$-sequence of the original instance of a \probVEST.
This is stated in \cite[Theorem 3.4]{Anick89} and it follows from \cite[Theorem 1.3]{Anick85} and \cite[Theorem 7.6]{Anick87}.

Given a presentation of a $123H$-algebra, one can construct a corresponding 4-dimensional simplicial complex in polynomial time
whose sequence of ranks $(\textup{rk }\pi_2, \textup{rk }\pi_3, \ldots)$ is related to the Tor-sequence of the $123H$-algebra.
In particular, it is possible to compute that Tor-sequence from the sequence of ranks using an $\FPT$ algorithm. (To be defined in the next paragraph). This follows from \cite{Roos79} and \cite{Cadek14_2}. To sum up, hardness of computing $M_k$ of \probVEST implies hardness of computing $\pi_k$.

\paragraph*{Parameterized Complexity and the $\W$-hierarchy}
Parameterized complexity classifies decision or counting computational problems with respect to a given parameter(s).
For instance, one can ask if there exists an independent set of size $k$ in a given graph
or how many independent sets of size $k$ (for counting version) are in a given graph, respectively,
where $k$ is the parameter. From this viewpoint, we can divide problems into several groups which
form the \emph{$\W$-hierarchy}.
\begin{align*}
    \FPT \subseteq \W[1] \subseteq \W[2] \subseteq \cdots \subseteq \XP
\end{align*}

The class $\FPT$ consists of decision problems solvable in time $f(k)n^{c}$,
where $f(k)$ is a computable function of the parameter $k$, $n$ is the size of input and $c$ is a constant,
while the class $\XP$ consists of decision problems solvable in time $cn^{f(k)}$.
The class $\W[1]$ consists of all problems
which admit a parameterized reduction to the satisfiability problem of a boolean circuit of constant depth
with AND, OR  and NOT gates such that there is at most 1 gate of higher input size than 2
on each path from the input gate to the final output gate (this number of larger gates is called \emph{weft}), where the parameter is the number of input gates set to TRUE. Here, a parameterized reduction from a parameterized problem $A$ to a parameterized problem $B$ is an algorithm that, given an instance $(x, k)$ of $A$, in time $f(k) n^c$ produces an equivalent instance $(x, k')$ of $B$ such that $k' \le g(k)$, for some computable functions $f(\cdot)$, $g(\cdot)$, and a constant $c$.
See Figure~\ref{Fig:clique} for an example of a reduction showing $\W[1]$-completeness of finding independent set of size $k$.

\begin{figure}
  \centering
  \includegraphics[scale=.862]{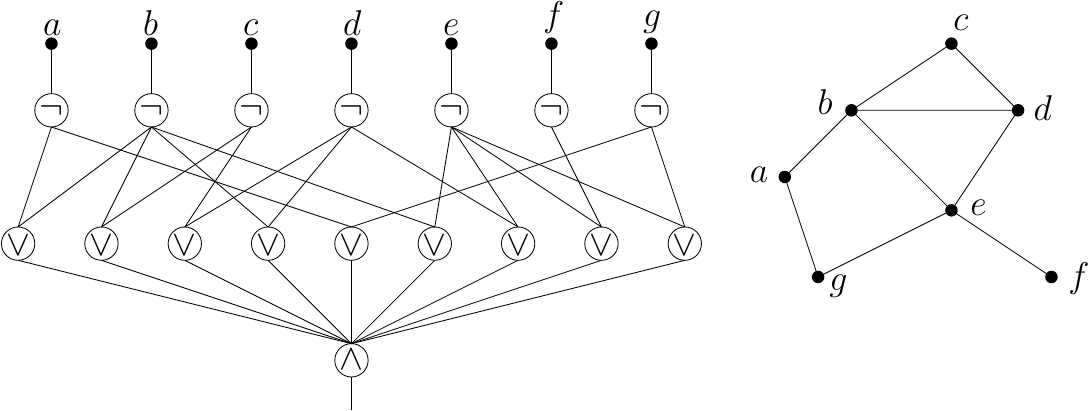}
  \caption{A boolean circuit solving the problem of existence of an independent set of size $k$ in the graph on the left.
  There is an independent set of size $k$ in the graph if and only if the boolean circuit outputs TRUE
  for an input consisting of exactly $k$ true values.
  }
  \label{Fig:clique}
\end{figure}

The class $\W[i]$ then consists of problems that admit a parameterized reduction to the satisfiablity problem of a boolean circuit of a constant depth and weft at most $i$, parameterized by the number of input gates set to TRUE.

It is only known that $\FPT \subsetneq \XP$, while the other inclusions in the \W-hierarchy are not known to be strict.
However, it is strongly believed that $\FPT \subsetneq \W[1]$.
\emph{Therefore, one cannot expect existence of an algorithm solving a $\W[1]$-hard problem in time $f(k)n^{c}$
where $f(k)$ is a computable function of $k$ and $c$ is a constant.}
For the detailed presentation of \W-hierarchy and parameterized complexity in general we refer the reader to~\cite{Flum04}.

\smallskip
Analogously, one can define classes $\FPT$ and $\XP$ for counting problems.
That is, a class of counting problems solvable in time $f(k)n^{c}$ or $cn^{f(k)}$, respectively.
Problems for which there is a parameterized counting reduction to a problem of counting solutions for a boolean circuit of
constant depth and weft at most $i$ then form class $\# \W[i]$.
Note that there are decision problems from $\FPT$ whose counting versions are $\# \W[1]$-hard,
e.g., counting paths or cycles of length $k$ parameterized by $k$~\cite{Flum04_counting}.
Similarly to the decision case, \emph{if a counting problem is shown to be $\# \W[i]$-hard for some $i$
one should not expect existence of an algorithm solving this problem in time $f(k)n^{c}$.}
For more details on parameterized counting we refer the reader to~\cite{Flum04_counting}.

\smallskip
In our case, the number $k$ of the homotopy group $\pi_k$ plays the role of the parameter.
In 2014 \v{C}adek et al.~\cite{Cadek14} proved that computing $\pi_k$
(and thus, also computing the rank of $\pi_k$) is in $\XP$ parameterized by $k$.

A lower bound for the complexity from the parameterized viewpoint was obtained by Matou\v{s}ek in 2013~\cite{Matousek13}.
He proved that computing $M_k$ of a \probVEST instance is $\# \W[1]$-hard. This also implies $\# \W[1]$-hardness for the original problem of computing the rank of higher homotopy groups $\pi_k$ (for 4-dimensional 1-connected spaces) for parameter $k$.
Matou\v{s}ek's proof also works as a proof for $\#\P$-hardness and it is shorter and considerably easier than the original proof of Anick in \cite{Anick89}.

In this paper, we strengthen the result of Matou\v{s}ek and show that computing $M_k$ of a \probVEST instance is $\# \W[2]$-hard.
Our proof is even simpler than the previous proof of $\# \W[1]$-hardness.

\begin{theorem}\label{thm:w2hardness}
  Given a \probVEST instance, computing $M_k$ is $\# \W[2]$-hard when parameterized by $k$.
\end{theorem}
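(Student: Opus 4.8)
The plan is to give a parameterized counting reduction from the problem of counting set covers (equivalently, dominating sets) of size $k$, which is a standard $\#\W[2]$-hard problem, to the computation of $M_k$. Given a set cover instance consisting of a universe $U=\{1,\dots,n\}$ and sets $S_1,\dots,S_m\subseteq U$, I would build a \probVEST instance of dimension $d=h=n$ as follows: take $\mathbf{v}=(1,\dots,1)^\top\in\mathbb{Q}^n$, let $S=I_n$ be the identity, and for each $S_i$ let $T_i$ be the diagonal $0/1$ matrix whose $j$-th diagonal entry is $0$ if $j\in S_i$ and $1$ otherwise. (If one prefers to start from \textsc{Dominating Set} on a graph $G=(V,E)$, the same construction applies with $U=V$ and $S_i=N[i]$ the closed neighborhood of vertex $i$; the correspondence between size-$k$ dominating sets and size-$k$ set covers is parsimonious.) All matrices and the vector are rational and produced in polynomial time, and the parameter is unchanged.

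The key observation is that the $T_i$ are diagonal and therefore commute, so for any tuple $(i_1,\dots,i_k)$,
\[
  \bigl(T_{i_k}\cdots T_{i_1}\mathbf{v}\bigr)_j=\prod_{\ell=1}^{k}\bigl[\,j\notin S_{i_\ell}\,\bigr],
\]
which equals $0$ exactly when $j$ lies in at least one chosen set and equals $1$ otherwise. Hence $S T_{i_k}\cdots T_{i_1}\mathbf{v}=\mathbf{0}$ if and only if $\bigcup_\ell S_{i_\ell}=U$, i.e.\ the chosen indices form a set cover. Consequently $M_k$ equals the number of ordered $k$-tuples (with repetitions) of indices whose sets cover $U$, and the decision version immediately inherits $\W[2]$-hardness.

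The main obstacle is that $M_k$ counts ordered tuples \emph{with repetitions}, whereas the source problem counts $k$-element subfamilies. I would bridge this gap by a standard interpolation. Grouping tuples by the set $A\subseteq\{1,\dots,m\}$ of distinct indices they use, and writing $\sigma(k,j)$ for the number of surjections from $[k]$ onto a $j$-element set (so $\sigma(j,j)=j!$ and $\sigma(k,j)=0$ for $j>k$), one gets
\[
  M_k=\sum_{j=1}^{k}\sigma(k,j)\,C_j,\qquad C_j:=\#\{A:\ |A|=j,\ \textstyle\bigcup_{i\in A}S_i=U\}.
\]
Since a single \probVEST instance determines the whole sequence $M_1,\dots,M_k$, I would query the oracle at each parameter value $1,\dots,k$ and solve this lower-triangular system: its diagonal entries $\sigma(j,j)=j!$ are nonzero, so $C_1,\dots,C_k$ are recovered in $\FPT$ time, and in particular $C_k$ — the number of size-$k$ set covers (dominating sets) — is obtained. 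This is a parameterized counting Turing reduction with bounded parameter, which establishes that computing $M_k$ is $\#\W[2]$-hard. The only point requiring care is verifying that the interpolation respects the constraints of a counting reduction (polynomially many queries, each with parameter at most $k$, and $\FPT$ postprocessing), which holds here since we make exactly $k$ queries and the triangular solve is elementary.
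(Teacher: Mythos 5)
Your proposal is correct, but it takes a genuinely different route from the paper. The paper also reduces from counting dominating sets of size $k$, but instead of the plain diagonal construction it attaches four coordinates $u_1,\ldots,u_4$ to each vertex: $u_1$ tracks domination exactly as in your construction, while $u_2,u_3,u_4$ form a small shift register ($u_2 \leftarrow u_3 \leftarrow u_4$, with $u_4$ fixed at $1$) whose effect is that the coordinate $u_2$ ends up nonzero precisely when the matrix $T_u$ is used two or more times. Consequently the product vanishes if and only if the chosen matrices are pairwise \emph{distinct} and dominate the graph, giving $M_k = k!\,D_k$; one oracle query followed by division by $k!$ finishes the reduction. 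You instead keep the simpler instance (diagonal $0/1$ matrices, $S = I$, all-ones $\mathbf{v}$ --- essentially what the paper uses only for the decision version in Remark~\ref{r:decision_vest_hard}) and repair the repetition issue afterwards, by the surjection-number inversion
\begin{align*}
  M_r=\sum_{j=1}^{r}\sigma(r,j)\,C_j,\qquad r=1,\ldots,k,
\end{align*}
solving the lower-triangular system from $k$ oracle calls. Both arguments are sound; the trade-off is as follows. Your approach yields a structurally simpler \probVEST instance (and, like the paper's, stays within $\Z_2$), but it is a parameterized counting \emph{Turing} reduction with $k$ queries, so it establishes hardness only under the more permissive reduction notion. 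The paper's gadget buys a single-query reduction in which the target count is a fixed computable multiple of the source count, which is the formally tighter statement (though note the paper's reduction is also not parsimonious, because of the $k!$ factor, so both proofs need a reduction notion allowing $\FPT$ post-processing of oracle answers; such Turing-style reductions are standard in parameterized counting, e.g.\ in the Flum--Grohe framework the paper cites). A final small point in your favor: your interpolation argument recovers all of $C_1,\ldots,C_k$, not just $C_k$, which is more information than the problem demands, and every query has parameter at most $k$, so the reduction is indeed $\FPT$.
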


Theorem~\ref{thm:w2hardness} together with the result of Anick~\cite{Anick89} implies the following.

\begin{corollary}
  Computing the rank of the $k$-th homotopy group of a $d$-dimensional 1-connected space
  for $d > 3$ is $\# \W[2]$-hard for parameter $k$.
\end{corollary}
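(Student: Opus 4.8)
The plan is to derive the corollary by composing the $\#\W[2]$-hardness of computing $M_k$ (Theorem~\ref{thm:w2hardness}) with the two reductions sketched in the introduction, turning a \probVEST instance first into a $123H$-algebra and then into a $4$-dimensional $1$-connected simplicial complex. The overall effect is a parameterized counting reduction from ``compute $M_k$'' to ``compute $\textup{rk}\,\pi_k$'', so that $\#\W[2]$-hardness is inherited by the latter problem for every admissible dimension $d > 3$.

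First I would invoke the construction of Anick~\cite[Theorem~3.4]{Anick89} (via~\cite[Theorem~1.3]{Anick85} and~\cite[Theorem~7.6]{Anick87}): from a \probVEST instance one builds, in polynomial time, a $123H$-algebra whose Tor-sequence coincides entry-by-entry with the $M$-sequence, i.e.\ $\textup{Tor}_k = M_k$ for all $k$. This step is parameter-preserving, since the index $k$ is carried through unchanged. Next, following~\cite{Roos79} and~\cite{Cadek14_2}, I would turn the presentation of this algebra, again in polynomial time, into a $4$-dimensional $1$-connected simplicial complex whose homotopy rank sequence $(\textup{rk}\,\pi_2, \textup{rk}\,\pi_3, \ldots)$ determines the Tor-sequence through an $\FPT$-computable transformation. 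The constructed complex is $4$-dimensional, which settles the case $d = 4$; since every $4$-dimensional complex is in particular a member of the class of $d$-dimensional $1$-connected spaces for each $d \geq 4$, the hardness propagates to all $d > 3$.

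To assemble these into a legitimate parameterized counting reduction I would argue as follows. To recover the single value $\textup{Tor}_k = M_k$ it suffices to know the ranks $\textup{rk}\,\pi_j$ for indices $j$ bounded by some computable function $g(k)$; querying an oracle for ``compute $\textup{rk}\,\pi_j$'' at each such $j$ gives $g(k)$ many calls, each with parameter at most $g(k)$, and the $\FPT$ transformation from ranks to Tor-values then outputs $M_k$ in time $f(k)\cdot\mathrm{poly}(n)$. This is exactly a parameterized counting Turing reduction, which preserves $\#\W[2]$-hardness, so computing $\textup{rk}\,\pi_k$ is $\#\W[2]$-hard for parameter $k$.

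The main obstacle I expect is bookkeeping the two parameter shifts and confirming that they remain bounded by a computable function of $k$: I must check that the correspondence between the $M$-, Tor-, and rank sequences is not merely asymptotic but localizes, so that $M_k$ depends only on a $g(k)$-length prefix of the rank sequence, and that the inversion from ranks back to Tor-values built from the homological algebra of~\cite{Roos79,Cadek14_2} is genuinely $\FPT$ rather than merely polynomial in the whole sequence length. Verifying $1$-connectedness of the constructed complex and that it lands in the required dimension are comparatively routine and follow directly from the cited constructions.
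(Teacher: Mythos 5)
Your proposal is correct and follows essentially the same route the paper intends: the paper derives the corollary in one line by composing Theorem~\ref{thm:w2hardness} with Anick's chain (\probVEST $\to$ $123H$-algebra whose Tor-sequence equals the $M$-sequence, then a $4$-dimensional $1$-connected complex whose rank sequence determines the Tor-sequence via an \FPT\ transformation), which is precisely the argument you reconstruct. Your extra bookkeeping about the parameterized counting Turing reduction and the $g(k)$-length prefix merely makes explicit what the paper leaves implicit in citing \cite{Anick89,Anick85,Anick87,Roos79,Cadek14_2}.
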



\begin{remark}
Note that computing $M_k$ of a \probVEST instance is an interesting natural self-contained problem
even without the topological motivation.
We point out that our reduction showing $\# \W[2]$-hardness of this problem
uses only 0, 1 values in the matrices and the initial vector $\mathbf{v}$.
Moreover, each matrix will have at most one 1 in each row.
Therefore, such construction also shows $\# \W[2]$-hardness of computing $M_k$ of a \probVEST instance in the $\Z_2$ setting.
That is, for the case when $T_1, T_2, \ldots T_m \in \Z_2^{d \times d}, S \in \Z_2^{h \times d}$ and $\mathbf{v} \in \Z_2^d$.
\end{remark}

\paragraph*{The Decision Version of \probVEST}
We also provide a comprehensive overview of the parameterized complexity of \probVEST as a decision problem, where given an instance of \probVEST one needs to determine whether $M_k > 0$. In addition to the standard variant of the problem, we consider several modifications of \probVEST: when the matrices have constant size, when the matrix $S$ is the identity matrix, when we omit the initial vector and the target is identity/zero matrix etc.

Unfortunately, even considering the simplifications above, we show that nearly all versions in our consideration are $\W[1]$- or $\W[2]$-hard.
The following table is an overview of our results.
\begin{center}
\begin{tabular}{ |l||c|c|l|c|c| }
\hline
Size of matrices & a) $\mathbf{v}$ and $S$ & b) only $\mathbf{v}$ & & c) only $S$ & d) no $\mathbf{v}$, no $S$ \\ \hline\hline
\multirow{2}{*}{1. $1\times 1$} & \multirow{2}{*}{$\P$} & \multirow{2}{*}{$\P$} & $\mathbf{0}$ & $\P$ & $\P$ \\
 & & & $I$ & $\W[1]$-hard & $\W[1]$-hard \\ \hline
\multirow{2}{*}{2. $2\times 2$} & \multirow{2}{*}{$\W[1]$-hard} & \multirow{2}{*}{$\W[1]$-hard} & $\mathbf{0}$ & \multirow{2}{*}{$\W[1]$-hard} & \multirow{2}{*}{$\W[1]$-hard} \\
 & & & $I$ & & \\ \hline
 \multirow{2}{*}{3. input size} & \multirow{2}{*}{$\W[2]$-hard} & \multirow{2}{*}{$\W[2]$-hard} & $\mathbf{0}$ & $\W[2]$-hard & $\W[2]$-hard \\
 & & & $I$ & $\W[1]$-hard & $\W[1]$-hard  \\ \hline
\end{tabular}
\end{center}
The first column stands for the standard \probVEST while the second stands for the \probVEST without the special matrix $S$ or alternatively,
for the case when $S$ is the identity matrix. Therefore, the hardness results for the first column follow from the second.

The third and the fourth columns are without the initial vector $\mathbf{v}$. In this case, it is natural to assume the following 
two targets for the result of the sought matrix product: the zero matrix (the rows labeled by $\mathbf{0}$) and the identity matrix (the rows labeled by $I$).
Again, the hardness results for the third column follow from the fourth.

Regarding the $1 \times 1$ case, the only nontrivial case is when the target is $I = 1$.
The $\W[1]$-hardness results for the $1 \times 1$ case also implies $\W[1]$-hardness
for the $2 \times 2$ case and the input size case when the target is the identity matrix.

Therefore, in Section~\ref{s:variants_of_VEST} we prove hardness for
\begin{itemize}
	\item ``1 d) $I$'' (Theorem~\ref{thm:k_product_hard}),
	\item ``2 b)'' (Theorem~\ref{thm:k_product_2x2_with_v_hard}),
	\item ``2 d) $\mathbf{0}$'' (Theorem~\ref{kProd0MatrHard}).
\end{itemize}
The $\# \W[2]$-hardness for ``3 c)'' follows from the proof of Theorem~\ref{thm:w2hardness} (see Remark~\ref{r:decision_vest_hard}) and we show that
``3 b)'' and ``3 d) $\mathbf{0}$'' 
are equivalent to ``3 a)'' under parameterized reduction (Theorems~\ref{thm:without_S_equivalence}, \ref{thm:equivalence_matrix_k_product_vest}).

\paragraph*{Fixed-Parameter Tractability over Finite Fields}
Reductions from the previous section show that \probVEST remains hard even on highly restricted instances, such as binary matrices with all the ones located along the main diagonal, or matrices of a constant size. However, it turns out that combination of this two restrictions -- on the field size and the matrix sizes -- makes even the counting version of \probVEST tractable. 

We proceed by lifting tractability to the matrices of unbounded size but with all non-zero entries occurring in at most the $p$ first rows. 
\begin{restatable}{theorem}{thmFPT}
\label{thm:FPT_result}
Given an instance of \probVEST and $k\in \N$, computing $M_k$ is $\FPT$ when parameterized by $|\F|$ and $p$, if all but the first $p$ rows of the input matrices are zeros.
\end{restatable}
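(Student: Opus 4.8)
The plan is to exploit the fact that the image of every input matrix is confined to the first $p$ coordinates, so that after a single matrix application the running product vector lives in an effectively $p$-dimensional space whose number of possible states is bounded purely in terms of the parameters. Write each input matrix as $T_i = \begin{pmatrix} A_i \\ \mathbf{0} \end{pmatrix}$ with $A_i \in \F^{p \times d}$, and let $B_i \in \F^{p \times p}$ be the top-left $p \times p$ block of $T_i$. For any $\mathbf{w} \in \F^d$ the vector $T_i \mathbf{w}$ has support contained in the first $p$ coordinates; and if $\mathbf{w}$ already has this property, say $\mathbf{w} = (\mathbf{u}^\top, \mathbf{0}^\top)^\top$ with $\mathbf{u} \in \F^p$, then $T_i \mathbf{w}$ is the zero-padding of $B_i \mathbf{u}$. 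Hence for any sequence with $k \geq 1$ the vector $T_{i_k} \cdots T_{i_1} \mathbf{v}$ is the zero-padding of $B_{i_k} \cdots B_{i_2}(A_{i_1}\mathbf{v})$, so the entire computation after the first step takes place inside $\F^p$, a set of only $N := |\F|^p$ states. Crucially, finiteness of the field is what makes $N$ a function of the parameters $|\F|$ and $p$.

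First I would set up a transfer-matrix dynamic program over the state set $\mathcal{S} = \F^p$. Treating the first matrix application as a special, full-dimensional step, compute for every $s \in \mathcal{S}$ the count $c_1(s) := |\{i : A_i \mathbf{v} = s\}|$ by iterating over the $m$ matrices. For the transitions, build the $N \times N$ nonnegative-integer matrix $P$ with $P_{s', s} := |\{i : B_i s = s'\}|$ by iterating over all matrices and all states. Viewing the counts as column vectors, the number of length-$j$ prefixes landing in a given state satisfies $c_j = P\, c_{j-1}$, so $c_k = P^{\,k-1} c_1$. A sequence is counted by $M_k$ exactly when $S$ annihilates the final zero-padded vector; since $S$ then sees only its first $p$ columns, letting $S' \in \F^{h \times p}$ denote the submatrix of those columns, the answer is
\[
M_k = \sum_{s \in \mathcal{S}\, :\, S' s = \mathbf{0}} c_k(s),
\]
with the degenerate case $k=0$ handled directly by testing $S\mathbf{v} = \mathbf{0}$.

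It remains to bound the running time, and here lies the only delicate point. All field and vector operations cost $\mathrm{poly}(p,d,h,\log|\F|)$; computing $c_1$, assembling $P$, and performing the final summation each cost $O\big(mN \cdot \mathrm{poly}(p,d,h)\big)$, which is of the form $f(|\F|,p)\cdot\mathrm{poly}(\text{input size})$ because $N = |\F|^p$. The one genuine obstacle is evaluating $P^{\,k-1}$ when $k$ is large: I would compute it by repeated squaring using $O(\log k)$ multiplications of $N \times N$ matrices, so that $k$ enters only logarithmically and causes no trouble even if encoded in binary. Since the entries of $c_k$ can be as large as $m^k$, they have $O(k \log m)$ bits, so one must verify that arithmetic on these large integers remains polynomial in the input size. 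Once the collapse to the bounded state space $\F^p$ is justified rigorously—in particular isolating the first application as a separate step—the transfer-matrix method delivers the claimed fixed-parameter tractability directly.
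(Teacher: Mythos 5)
Your proposal is correct, but it takes a genuinely different route from the paper's. The paper proves Theorem~\ref{thm:FPT_result} by extending its dynamic program for bounded-size matrices (Lemma~\ref{thm: fin_field_sizes}): writing each input matrix as $A|B$, where $A$ is the top-left $p\times p$ block and $B$ the top-right $p\times(d-p)$ block, it uses the identity $(A_1|B_1)(A_2|B_2)=(A_1A_2)|(A_1B_2)$ (Corollary~\ref{cor: decomp}) and counts, for every $p\times p$ matrix $X$ over $\F$, the sequences of length $i$ whose product of $A$-blocks equals $X$; the answer is assembled at the end by summing over the choice of the factor adjacent to $\mathbf{v}$ (the only factor whose $B$-block matters), checking $S\cdot\bigl((XA_j)|(XB_j)\bigr)\cdot\mathbf{v}=\mathbf{0}$. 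Its state space is thus $\mathcal M_{\F}^p$, of size $|\F|^{p^2}$. You instead track the orbit of $\mathbf{v}$ itself: after one application the state is a vector in $\F^p$, giving a state space of size only $|\F|^p$, and the special role of the first applied matrix (for which all $d$ columns matter) is absorbed into the initialization $c_1$ rather than into the final summation. Both arguments hinge on the same two observations---products collapse onto the first $p$ coordinates, and exactly one factor requires full-width treatment---but your vector-state DP is exponentially more economical in the parameters and somewhat simpler, whereas the paper's matrix-state DP is a verbatim extension of its Lemma and is what its subsequent remark (computing the minimal $k$ with $M_k>0$) is phrased in terms of; your formulation supports that remark equally well via saturation of the set of reachable vector states.

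The one point to tidy up is the encoding of $k$, which you flag but leave unresolved. Under the standard convention for this problem $k$ is given in unary (cf.\ the paper's footnote on Anick's result; the paper's own DP performs $k$ iterations and so is also only polynomial in $k$, not in $\log k$). Then the entries of $c_k$ have $O(k\log m)$ bits, all arithmetic is polynomial in the input size, and you may simply iterate $c_j = P\,c_{j-1}$ for $k-1$ steps, dropping the repeated squaring. If instead $k$ were encoded in binary, no algorithm could output $M_k$ exactly in $\FPT$ time for reasons independent of your construction: $M_k$ can be as large as $m^k$ and then requires exponentially many bits just to write down. So the ``delicate point'' dissolves once the encoding is fixed, and the rest of your argument stands as is.
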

The problem remains $\FPT$ with respect to $|\F|$ and $p$ even if the task is to find the minimal $k$ for which the vanishing sequence of length $k$ exists, or to report that there is no such $k$. 

\paragraph*{Undecidability of \probVEST Without Parameter} \label{s:W[2]_hardness}
In contrast, we show in the last section (Section~\ref{s:undecidability})
that for $\F=\mathbb{Q}$ the problem of determining whether there exists $k$ such that $M_k > 0$ for an instance of \probVEST is an undecidable problem
(even for the case where $T_1, \ldots, T_m$ are of size $4 \times 4$).

\section{The Proof of \#\W[2]-hardness of \probVEST} 
In this section, we prove that computing $M_k$ of a \probVEST is $\# \W[2]$-hard (Theorem~\ref{thm:w2hardness}).
Our reduction is from the problem of counting dominating sets of size $k$
which is known to be $\# \W[2]$-complete (see~\cite{Flum04_counting}) and which we recall in the paragraph below.

For a graph $G(V,E)$ and its vertex $v \in V$
let $N[v]$ denote the closed neighborhood of a vertex $v$.
That is, $N[v] := \{u \in V; \{u,v\} \in E\} \cup \{v\}$.
A \emph{dominating set} of a graph $G(V,E)$ is a set $U \subseteq V$
such that for each $v$ there is $u \in U$ such that $v \in N[u]$.
\pbDef{Number of dominating sets of size $k$}
{A graph $G(V,E)$ and a parameter $k$.}
{How many dominating sets of size $k$ are in $G$?}

\begin{proof}[Proof of Theorem~\ref{thm:w2hardness}]
  As we said, we show an FPT counting reduction from the problem of counting dominating sets of size $k$ to \probVEST.
  
  Let $G=(V,E)$ be the input graph and let $n = |V|$.
  The corresponding instance of \probVEST will consist of $n$ matrices $\{T_u \: : \: u \in V\}$ of size $4n \times 4n$, one for each vertex,
  and matrix $S$ of the same size.
  Whence, the initial vector $\mathbf{v}$ must be of size $4n$.
  For each vertex $u \in V$, we introduce four new coordinates $u_1, \ldots, u_4$ and set $\mathbf{v}_{u_1} =1, \mathbf{v}_{u_2} = \mathbf{v}_{u_3} = 0$ and $\mathbf{v}_{u_4} = 1$.
  
  We define the matrices $\{T_u \: : \: u \in V\}$ and $S$ by describing their behavior.
  Let $\mathbf{x}$ be a vector which is going to be multiplied with a matrix $T_u$ (that is, some intermediate vector obtained from $\mathbf{v}$ after potential multiplications).
  The matrix $T_u$ sets $\mathbf{x}_{w_1}$ to zero for each $w \in N[u]$,
  which corresponds to domination of vertices in $N[u]$ by the vertex $u$,
  and also sets $\mathbf{x}_{u_2}$ to $\mathbf{x}_{u_3}$ and $\mathbf{x}_{u_3}$ to $\mathbf{x}_{u_4}$.
  The rest of the entries of $\mathbf{x}$ including $\mathbf{x}_{u_4}$ are kept, see Figure~\ref{fig:matrix_T_u}.
  \begin{figure}
  \centering
  $\begin{pmatrix}
  	0 & 0 & 0 & 0\\
    0 & 0 & 1 & 0\\
    0 & 0 & 0 & 1\\
    0 & 0 & 0 & 1
  \end{pmatrix}$
  \caption{The submatrix of $T_u$ consisting of rows and columns $u_1, \ldots, u_4$.
   The rest of the non-diagonal entries of $T_u$ are zeros. The diagonal entries $T_u^{w_1,w_1}$ for $w \in N[u]$ are zeros,
   the rest of the diagonal entries are ones.}\label{fig:matrix_T_u}
  \end{figure}  
   
  The matrix $S$ then nullifies coordinates $u_3, u_4$ and keeps the coordinates $u_1$ and $u_2$ for each $u \in V$.
  In other words, $S$ is diagonal such that $S^{u_1, u_1} = S^{u_2,u_2} = 1$ and $S^{u_3, u_3} = S^{u_4,u_4} = 0$.
  
  The parameter remains equal to $k$.
 
  For correctness, let $u^1, \ldots, u^k$ be any vertices from $V$, and let $\mathbf{r}$ be the vector obtained from $\mathbf{v}$ after multiplying by the matrices $T_{u^1}, \ldots, T_{u^k}$ (observe that the order of multiplication does not matter since all $T_u$, $u\in V$, pairwise commute).
  By construction, for every vertex $u \in V$, the entry $\mathbf{r}_{u_1} = 0$ if and only $u$ is dominated by some $u^i$, $i\in[k]$,
  and $\mathbf{r}_{u_2} = 0$ if and only if $T_u$ appears among $T_{u^1}, \ldots, T_{u^k}$ at most once.
  Indeed, if $T_u$ is selected once then $\mathbf{r}_{u_2} = \mathbf{v}_{u_3} = 0$
  while if it is selected more than once then $\mathbf{r}_{u_2} = \mathbf{v}_{u_4} = 1$.
  If $T_u$ is not among $T_{u^1}, \ldots, T_{u^k}$ then $\mathbf{r}_{u_2} = \mathbf{v}_{u_2} = 0$.

  Therefore, $\mathbf{r}=T_{u^1}\ldots T_{u^k} \mathbf{v}$ is a zero vector if and only if $u^1, \ldots ,u^k$ are pairwise distinct and form the dominating set in $G$. This provides a one-to-one correspondence between subsets of matrices yielding the solution of \probVEST and dominating sets of size $k$ in $G$. It remains to note that every such subset of matrices gives rise to $k!$ sequences that have to be counted in $M_k$. Hence, $M_k = k!D_k$ where $D_k$ is the number of dominating sets of size $k$ in $G$. The reduction is clearly $\FPT$
  since the construction does not use parameter $k$ and is polynomial in size of the input.
\end{proof}
%
\begin{remark}\label{r:decision_vest_hard}
Note that the decision version of the problem of \textsc{Dominating Sets of Size $k$} is $\W[2]$-hard.
For showing \W[2]-hardness of the decision version of \probVEST
we need not deal with the repetition of matrices.
In particular, we do not need the special coordinates $u_2, u_3, u_4$ and therefore,
the corresponding instance of \probVEST can consist only of diagonal 0,~1 matrices of size $n \times n$.
\end{remark}

\section{Modifications of \probVEST}\label{s:variants_of_VEST}

In this section, we prove hardness for the variants of the decision version of \probVEST we have discussed in the introduction.
First of all, we recall a well-known $\W[1]$-hard \probkSum problem. See also \cite{AbboudLW14}.
%
%

\pbDef{\probkSum}
{A set $A$ of integers and a parameter $k$.}
{Is it possible to choose $k$ distinct integers from $A$ such that their sum is equal to zero?}

We note that in the versions of \textsc{$k$-Sum} studied in the literature the goal is to pick \emph{distinct} elements of the input set in order to achieve 0
or eventually another number. However, the motivation for \probVEST, to the contrary, does not suggest that the matrices chosen for the product have to be distinct. Thus, in order to model \probVEST by \probkSum, it is more natural to also allow repetition of numbers. For our particular proofs, we will use the following version with target number 1.

\pbDef{\probAtmkSumRep}
{A set $A$ of integers and parameter $k$.}
{Is it possible to choose \emph{at most} $k$ integers from $A$ (possibly with repetition) such that their sum is equal to $1$?}

We are not aware of any previous studies on parameterized complexity of \probAtmkSumRep,
nor does it seem that there exists a simple parameterized reduction from the original variant of the problem to the one with repetitions.
Therefore, in Appendix (see~\ref{sec:appendix}, Theorem~\ref{thm:at_most_k_sum_with_repetition_hardness})
we prove $\W[1]$-hardness of this problem directly.
Our reduction is from the problem of \probkEC, which is known to be $\W[1]$-hard (see~\cite{Downey95}).

\pbDef{\probkEC}
{A universe $U$, a collection $\mathcal{C}$ of subsets of $U$ and a parameter $k$.}
{Can $U$ be partitioned into $k$ sets from $\mathcal{C}$?}

When we assume multiplication instead of addition the following problem arises.

\pbDef{\probkProdRep}
{A set $A$ of rational numbers and a parameter $k$.}
{Is it possible to choose $k$ numbers from $A$ (possibly with repetitions) such that their product is equal to 1?}

$\W[1]$-hardness for this problem might be a folklore result but we present a complete proof using
a reduction from \probkEC. 


\begin{theorem}\label{thm:k_product_hard}
  \probkProdRep is $\W[1]$-hard parameterized by $k$.
\end{theorem}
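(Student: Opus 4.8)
The plan is to give a parameterized reduction from \probkEC, which is $\W[1]$-hard. The guiding idea is the classical encoding of set systems by products of primes: I assign to the elements of the universe distinct primes, so that multiplying the encodings of several sets corresponds to taking the (multiset) union of those sets, and an \emph{exact} cover corresponds to every prime occurring exactly once in the product. The only genuinely new feature to deal with is that \probkProdRep allows repetitions, so I must engineer the instance so that no spurious product equal to $1$ can arise from picking a number several times.

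Concretely, given an instance $(U,\mathcal{C},k)$ of \probkEC, I first pick $|U|+1$ distinct primes: one prime $p_e$ for every element $e\in U$, together with one auxiliary prime $q$. Writing $P=\prod_{e\in U}p_e$, I encode each (without loss of generality nonempty) set $C\in\mathcal{C}$ by the integer $a_C=q\prod_{e\in C}p_e$, and I add a single balancing rational number $z=(q^{k}P)^{-1}$. The output instance of \probkProdRep is $A=\{a_C:C\in\mathcal{C}\}\cup\{z\}$ with parameter $k'=k+1$. Since the first $|U|+1$ primes have bit-length polynomial in $|U|$ and $q^{k}$ has bit-length $O(k\log|U|)$, the instance is computable in $\FPT$ time, and $k'=k+1\le g(k)$, so this is a valid parameterized reduction.

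Next I verify the correspondence. For the forward direction, an exact cover $C_1,\dots,C_k$ (distinct sets partitioning $U$) yields the $k+1$ numbers $a_{C_1},\dots,a_{C_k},z$, whose product is $q^{k}P\cdot q^{-k}P^{-1}=1$, using that each $p_e$ appears exactly once because the $C_i$ partition $U$. For the converse, I track the exponent of $q$ separately from the exponents of the $p_e$. Every $a_C>1$ while $z<1$, so a product of $k+1$ chosen factors equalling $1$ must use $z$; say $z$ is used $j\ge 1$ times and set-encodings $m=k+1-j$ times. Matching the power of $q$ gives $m-jk=0$, i.e.\ $m=jk$, which together with $m=k+1-j$ forces $j(k+1)=k+1$, hence $j=1$ and $m=k$. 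The $p_e$-part of the product then reads $\bigl(\prod_{i}\prod_{e\in C_i}p_e\bigr)P^{-1}=1$, so the $k$ chosen sets cover each element exactly once; square-freeness of $P$ additionally forces these sets to be pairwise distinct. Thus they form an exact cover of size $k$.

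The step I expect to require the most care is precisely this control of \emph{repetitions}, the feature distinguishing \probkProdRep from the distinct-element $k$-Sum/$k$-Product variants. The auxiliary prime $q$ is what makes the argument close: tagging each set-encoding with one factor of $q$ turns the number of chosen sets into the $q$-exponent of the product, so the balancing element $z$ (which carries $q^{-k}$) can be selected at most—and therefore exactly—once, pinning the number of selected sets to exactly $k$ and ruling out spurious $j$-fold covers (for instance, repeatedly choosing a set equal to $U$). Once repetitions are neutralized in this way, the remaining verification is the routine prime-factorization bookkeeping sketched above.
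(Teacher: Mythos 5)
Your proposal is correct and follows essentially the same route as the paper's proof: a reduction from \probkEC using one prime per universe element, one auxiliary "tagging" prime on each set-encoding, a single balancing rational carrying the inverse of all primes (with the tag raised to the $k$-th power), and parameter $k+1$. The only cosmetic difference is how the uniqueness of the balancing element is argued (you solve the exponent equations $m=jk$, $m=k+1-j$; the paper uses an inequality on the available powers of the tagging prime), but the construction and correctness argument are the same.
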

\begin{proof}
  We show a parameterized reduction from \probkEC. For each element $u \in U$ we associate one prime $p_u$,
  then for each $C \in \mathcal{C}$  we set $i_C := p \prod_{c \in C} p_c$ where $p$ is a prime which is not used for any element from $U$
  and $s := \frac{1}{p^k\prod_{u \in U} p_u}$.

  The integers $i_C$ for each $C \in \mathcal{C}$ and $s$ then form the input for \textsc{$(k+1)$-Product with Repetitions}
  
  If $C_1, C_2, \ldots, C_k \in \mathcal{C}$ is a solution of \probkEC
  then $s\prod_{i=1}^{k}i_{C_i} = 1$.
  
  Conversely, let $q_1, q_2, \ldots, q_{k+1}$ be a solution of the constructed \textsc{$(k+1)$-Product with Repetitions}.
  First of all, note that $s$ must be chosen precisely once.
  Indeed, all numbers except for $s$ are greater than 1 and thus, $s$ must be chosen at least once. If it were chosen more than once
  it would not be possible to cancel a power of $p^k$ in the denominator since the numerator would contain at most $p^{k-1}$.
  Therefore, the product of $q_1, q_2, \ldots, q_{k+1}$ is of the form $s i_{C_{j_k}} i_{C_{j_{k-1}}} \cdots i_{C_{j_1}} = 1$ which means
  that each prime representing an element of $U$ in the denominator is canceled. In other words, each element of $U$ is covered. Note also
  that since $s$ is chosen precisely once there cannot be any repetition within $i_{C_{j_k}} i_{C_{j_{k-1}}} \ldots i_{C_{j_1}}$.
  
  The reduction is parameterized since we only need the parameter $k$ for $k$ multiplications of $\frac{1}{p}$
  and first $n+1$ primes, where $n = |U|$, can be generated in time $\mathcal{O}(n^3)$ using, e.g., the Sieve of Eratosthenes for $(n+1)^2$.
  This follows from the fact, that the first $n$ primes lie among $1, \ldots, n^2$.
  For more details we refer the reader to Lemma~\ref{lem:n_th_prime} in Appendix (\ref{sec:appendix}).
\end{proof}
Let us now call the variant of \probVEST without $S$ and $\mathbf{v}$ \probMkProdRep. As we have mentioned in the introduction
we consider two cases regarding the target matrix. Namely, the Identity matrix and the Zero matrix:

\pbDef{\probMkProdZero}
{A list of $d \times d$ rational matrices and a parameter $k$.}
{Is it possible to choose $k$ matrices from the list (possibly with repetitions) such that their product is the $d\times d$ zero matrix?}

\pbDef{\probMkProdId}
{A list of $d \times d$ rational matrices and a parameter $k$.}
{Is it possible to choose $k$ matrices from the list (possibly with repetitions) such that their product is the $d\times d$ identity matrix?}
Note that \probMkProdId for $1\times 1$ matrices is exactly \probkProdRep. Therefore $\W[1]$-hardness for
\probMkProdId  for all matrix sizes follows from Theorem~\ref{thm:k_product_hard}.

Regarding \probMkProdZero, we can easily see that it is solvable in linear time for $1 \times 1$ matrices.
Indeed, it is sufficient to check whether $T_i=0$ for some $i$.
However, already for $2 \times 2$ matrices the problem becomes hard.
\begin{theorem}
\label{kProd0MatrHard}
 \probMkProdZero is $\W[1]$-hard for parameter $k$ even for $2 \times 2$ integer matrices.
\end{theorem}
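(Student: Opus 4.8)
The plan is to give a parameterized reduction from \probAtmkSumRep, which is $\W[1]$-hard (as established in the appendix), to \probMkProdZero with $2\times 2$ integer matrices. Given an instance $(A,k)$ of \probAtmkSumRep, I will first \emph{scale}: replace every $a\in A$ by $2a$, so that all values become even and the condition ``sum equal to $1$'' turns into ``sum equal to $2$''. For the target instance I use budget $K:=k+2$ and the matrices: a shear $T_a=\begin{pmatrix}1&0\\ 2a&1\end{pmatrix}$ for each $a\in A$, the identity $I$ (used only for padding), and two singular ``cap'' matrices $B=\begin{pmatrix}1&0\\0&0\end{pmatrix}$ and $C=\begin{pmatrix}2&-1\\2&-1\end{pmatrix}$. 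All of these are $2\times 2$ integer matrices.

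The key identity driving the construction is that any product of shears and identities accumulates the scaled sum in the lower-left corner, i.e. it equals $U_s:=\begin{pmatrix}1&0\\ s&1\end{pmatrix}$ with $s$ the sum of the chosen values, and a direct computation gives $C\,U_s\,B=(2-s)\begin{pmatrix}1&0\\1&0\end{pmatrix}$, which vanishes exactly when $s=2$. Thus for the forward direction, a genuine solution (at most $k$ values with repetition summing to $1$, hence scaled values summing to $2$) yields the zero product $C\,(\prod_i T_{a_i})\,B$, which I pad with copies of $I$ to reach length exactly $K$.

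The main obstacle is \emph{soundness}: I must rule out every spurious selection and ordering of $K$ matrices that could produce the zero matrix. Since $T_a$ and $I$ are invertible while $B,C$ have rank one, any zero product must contain at least two singular factors. I will write each singular matrix as a column–row product, $B=e_1\,(1,0)$ and $C=\binom{1}{1}(2,-1)$, and collapse every maximal block of shears and identities into a single $U_s$. The whole product then becomes a scalar multiple of one outer product, and it vanishes iff some connecting scalar $\rho_X\,U_s\,c_Y$ between two consecutive singular factors $X,Y\in\{B,C\}$ is zero. A short case check yields these connecting scalars as $1$ for the pairs $(B,B)$ and $(B,C)$, as $2-s$ for $(C,B)$, and as $1-s$ for $(C,C)$.

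The decisive point is a parity argument: every shear value is even and identities contribute $0$, so each block sum $s$ is always even. Consequently the $(C,C)$ short-circuit (which would need $s=1$) can never occur, and the \emph{only} way to reach the zero matrix is a pattern $C\,U_s\,B$ with $s=2$. The shears inside that block number at most $k$ (as at least two of the $K=k+2$ factors are caps), and after unscaling they exhibit at most $k$ original numbers summing to $1$, recovering a \probAtmkSumRep solution. Since the construction never uses $k$ and is polynomial in the input size, the reduction is $\FPT$, which completes the argument.
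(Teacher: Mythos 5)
Your reduction is correct, and your computations check out: the connecting scalars are indeed $1$ for $(B,B)$ and $(B,C)$, $2-s$ for $(C,B)$, and $1-s$ for $(C,C)$, and the outer-product collapse argument (zero product forces at least two singular factors, and then vanishing is equivalent to some connecting scalar vanishing) is sound. You reduce from the same source problem (\probAtmkSumRep) with the same basic gadgetry as the paper --- unipotent shears that accumulate the selected sum, plus rank-one ``caps'' that turn the target-sum condition into vanishing of the product --- but your soundness argument is genuinely different. The paper uses a \emph{single} cap $X = \left(\begin{smallmatrix} 0 & 0 \\ -1 & 1\end{smallmatrix}\right)$, which is idempotent, and reasons about a zero product of \emph{minimal length}: a block between consecutive occurrences of $X$ satisfies $XU_rX = (1-r)\,X$, so minimality forces $r=1$, which yields the desired sum. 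You avoid any minimality argument: by doubling the input integers you obtain a parity invariant (every block sum $s$ is even), which kills the spurious $(C,C)$ pattern outright, and the trivial scalars $1$ kill $(B,B)$ and $(B,C)$, so the only vanishing pattern is $C\,U_s\,B$ with $s=2$. What the paper's route buys is brevity: one extra matrix, no scaling, and a two-line case analysis resting on idempotence. What your route buys is a complete classification of \emph{all} ways a product from the generating set can vanish, which makes the backward direction more transparent and independent of any extremal choice; it also isolates the linear-algebraic mechanism (rank-one factorization) that implicitly underlies the paper's computation $XU_rX=(1-r)X$. One cosmetic remark: your explicit padding by identity matrices in the forward direction is clean but not strictly necessary, since once a sub-product equals the zero matrix any further factors preserve it --- this is how the paper implicitly reaches exactly $k+2$ factors; relatedly, the reduction does use $k$ to set the new parameter $k+2$, which is of course permitted in a parameterized reduction.
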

\begin{proof}
  We reduce from \probAtmkSumRep. 
For every integer $x$ let us define 
  \begin{align*}
    U_x := 
    \renewcommand{\arraystretch}{1.2}
        \left(
          \begin{array}{ c c }
            1 & x \\
            0 & 1
          \end{array}
        \right).
  \end{align*}
It is easy to see that $U_xU_y = U_{x+y}$. 
Let $\mathcal{I}$ be an instance of \probAtmkSumRep with the set of integers $A$ 
and parameter $k$.
We create an equivalent instance $\mathcal{I}^\prime$ of \textsc{Matrix $(k+2)$-product with Repetitions} with the set of matrices
$\{U_a:a\in A\}\cup\{X\}$, where 
 \begin{align*}
    X= 
    \renewcommand{\arraystretch}{1.2}
        \left(
          \begin{array}{ c c }
            0 & 0 \\
           -1 & 1
          \end{array}
        \right).
  \end{align*}
 For correctness, assume that $\mathcal{I}$ is a YES-instance and $a_1, \ldots, a_\ell \in A$ are such that $\ell\le k$ and $\sum_{i=1}^\ell a_i=1$.
Consider the following product of $\ell+2$ matrices:
 \begin{align*}
    X\cdot \prod_{i=1}^l U_{a_i} \cdot X= X\cdot U_{\sum_{i=1}^\ell a_i} \cdot X = XU_1X=
    \renewcommand{\arraystretch}{1.2}
        \left(
          \begin{array}{ c c }
            0 & 0 \\
           -1 & 1
          \end{array}
        \right)
 \left(
          \begin{array}{ c c }
            1 & 1 \\
            0 & 1
          \end{array}
        \right)
 \left(
          \begin{array}{ c c }
            0 & 0 \\
           -1 & 1
          \end{array}
        \right)=
        \mathbf{0}.
  \end{align*}
For the other direction, assume that  $\mathcal{I}^\prime$ is a YES-instance.
Let $\ell$, $1\le \ell\le {k+2}$, be the minimal integer such that there are matrices $T_1,\dots, T_\ell$
from $\{U_a:a\in A\}\cup\{X\}$  with  $T_\ell T_{\ell-1} \cdots T_1 = \mathbf{0} \in \mathbb{Q}^{2 \times 2}$.
Since the matrix $X$ is idempotent (i.e. $X^2=X$), it does not appear two times in a row, otherwise we could reduce the length of the product.
Notice that $X$ should appear at least once, since the determinants of all $U_a$ are non-zero.
Assume that there is precisely one occurrence of $X$, then the product has form:
 \begin{align*}
    U_rXU_s=&
    \renewcommand{\arraystretch}{1.2}
        \left(
          \begin{array}{ c c }
            1 & r \\
            0 &1
          \end{array}
        \right)
 \left(
          \begin{array}{ c c }
            0 & 0 \\
           -1 & 1
          \end{array}
        \right)
 \left(
          \begin{array}{ c c }
            1 & s \\
            0 & 1
          \end{array}
        \right)=
 \left(
          \begin{array}{ c c }
            -r & -rs+r \\
            -1 & 1-s
          \end{array}
        \right)\neq \mathbf{0}.
  \end{align*} 
Hence, $X$ appears at least twice. Let us fix any two consequent occurrences and consider the partial product between them:
 \begin{align*}
    XU_rX=
    \renewcommand{\arraystretch}{1.2}
        \left(
          \begin{array}{ c c }
            0 & 0 \\
           -1 &1
          \end{array}
        \right)
 \left(
          \begin{array}{ c c }
            1 & r \\
           0 & 1
          \end{array}
        \right)
 \left(
          \begin{array}{ c c }
            0 & 0 \\
            -1 & 1
          \end{array}
        \right)=
 \left(
          \begin{array}{ c c }
            0 & 0 \\
            r-1 & 1-r
          \end{array}
        \right)=(1-r)\cdot X.
  \end{align*} 
If $r\ne 1$, we would get a shorter product resulting in zero, which contradicts to minimality of $\ell$. Hence $r=1$, so the product of $U_a$ that appear between two occurrences of $X$ is equal to $U_1$. Since there are at most $k$ of such $U_a$ and the sum of corresponding indices $a$ is equal to $1$, we obtain a solution to $\mathcal{I}$. 
\end{proof}

We can use similar approach to establish hardness of the \probVEST problem without $S$ (or alternatively when $S$ is the identity matrix).
Recall that here the task is to obtain not necessarily a zero matrix but any matrix which contains a given vector $\mathbf{v}$ in a kernel. 
\begin{theorem}\label{thm:k_product_2x2_with_v_hard}
 \probVEST is $\W[1]$-hard for parameter $k$ even for $2 \times 2$ integer matrices and when $S$ is the identity matrix.
\end{theorem}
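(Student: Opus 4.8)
The plan is to mimic the reduction behind Theorem~\ref{kProd0MatrHard}, again starting from \probAtmkSumRep and reusing the shear matrices $U_a = \left(\begin{smallmatrix} 1 & a \\ 0 & 1 \end{smallmatrix}\right)$, which satisfy $U_a U_b = U_{a+b}$, together with the idempotent $X = \left(\begin{smallmatrix} 0 & 0 \\ -1 & 1 \end{smallmatrix}\right)$. The key observation that lets us drop from two copies of $X$ to a single one is that we are no longer forced to produce the zero matrix: it suffices that the chosen product annihilate a prescribed vector $\mathbf{v}$. I would set $S = I_2$, $\mathbf{v} = \left(\begin{smallmatrix} 0 \\ 1 \end{smallmatrix}\right)$, and take the matrix list to be $\{U_a : a \in A\} \cup \{X\}$, reducing an instance of \probAtmkSumRep with parameter $k$ to a \probVEST instance with parameter $k+1$. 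The guiding computation is that $U_s \mathbf{v} = \left(\begin{smallmatrix} s \\ 1 \end{smallmatrix}\right)$ and then $X \left(\begin{smallmatrix} s \\ 1 \end{smallmatrix}\right) = \left(\begin{smallmatrix} 0 \\ 1 - s \end{smallmatrix}\right)$, which vanishes exactly when $s = 1$.

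For the forward direction, given $a_1, \dots, a_\ell \in A$ with $\ell \le k$ and $\sum_i a_i = 1$, I would exhibit the length-$(k+1)$ product $X\, U_{a_\ell} \cdots U_{a_1}\, X^{\,k-\ell}$ applied to $\mathbf{v}$. Since $X \mathbf{v} = \mathbf{v}$ (as $X$ fixes every vector with zero first coordinate), the $k - \ell$ leading copies of $X$ serve only as padding to reach exactly $k+1$ factors, and the remaining computation is $X U_1 \mathbf{v} = \mathbf{0}$ by the identity above. Note that this keeps the matrix list unchanged: no identity matrix needs to be added, because $X$ itself acts as a harmless pad on $\mathbf{v}$.

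The reverse direction is the crux, and I would handle it by minimality rather than by a case analysis on the arrangement of factors. Take a shortest product $T_{i_\ell} \cdots T_{i_1} \mathbf{v} = \mathbf{0}$ with $\ell \le k+1$. First, $X$ must occur, since a product of $U_a$'s equals some invertible $U_s$ and cannot annihilate the nonzero $\mathbf{v}$. The decisive structural fact is that the moment an $X$ is applied the running vector acquires a zero first coordinate, and is therefore a scalar multiple $c\,\mathbf{v}$ of the target. Hence, whenever an $X$ sits at a position $j < \ell$: if the scalar $c$ is nonzero, the suffix after position $j$ already annihilates $c\,\mathbf{v}$, hence $\mathbf{v}$, giving a strictly shorter solution; if $c = 0$, the prefix up to position $j$ already annihilates $\mathbf{v}$. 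Either way minimality is violated, so the unique occurrence of $X$ must be the very last factor. Consequently positions $1, \dots, \ell-1$ are all shears $U_{a_i}$ with $a_i \in A$, vanishing of the last coordinate forces their sum to be $1$, and these at most $\ell - 1 \le k$ elements of $A$ constitute the desired solution of \probAtmkSumRep.

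I expect the only genuinely delicate point to be ruling out the baroque products that interleave several blocks of $U$'s with multiple copies of $X$; the scalar-multiple observation is exactly what collapses all of these, turning what could be a messy induction on partial block sums into a one-line minimality argument. The remaining bookkeeping — integrality of the entries, that $S = I_2$ and the $2 \times 2$ size are respected, and that the parameter only grows from $k$ to $k+1$ — is routine and immediate from the construction.
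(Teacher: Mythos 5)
Your proposal is correct and follows essentially the same route as the paper's proof: the identical reduction from \probAtmkSumRep using the shears $U_a$, the idempotent $X$, the vector $\mathbf{v}=(0,1)^T$ with $S=I_2$ and parameter $k+1$, and a minimality argument exploiting that $X$ maps every vector to a scalar multiple of $\mathbf{v}$. The only (cosmetic) differences are that you pad explicitly with copies of $X$ on the right to reach exactly $k+1$ factors — a detail the paper glosses over — and that you rule out every non-final occurrence of $X$ at once, whereas the paper analyzes the first occurrence of $X$ and derives the same contradiction.
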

The proof of this theorem is very similar to the proof of Theorem~\ref{kProd0MatrHard} and it can be found in Appendix (see~\ref{sec:appendix}).

At the end of this section, we show that \probVEST is equivalent to \probVEST without $S$ (in other words, when $S = I_d$)
and to \probMkProdZero.

\begin{theorem}\label{thm:without_S_equivalence}
  There is a parameterized reduction from \probVEST to the special case of \probVEST where $S$ is the identity matrix, and the other way around.
\end{theorem}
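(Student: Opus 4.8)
The plan is to prove the two required reductions separately; one direction is immediate and essentially all the content lies in the other. For the direction from the special case to the general problem, observe that any instance of \probVEST in which $S$ happens to equal $I_d$ is \emph{already} an instance of general \probVEST. Hence the identity map is a (parsimonious) parameterized reduction in that direction, and I may focus entirely on the converse: reducing general \probVEST to the case $S = I$.

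For the converse I would start from an instance $(\mathbf v, (T_1,\dots,T_m), S, k)$ with $\mathbf v\in\mathbb Q^d$, $T_i\in\mathbb Q^{d\times d}$ and $S\in\mathbb Q^{h\times d}$, and build a new instance in which the action of $S$ is carried out by one additional matrix $T_S$ that is \emph{forced to be applied exactly once, as the last factor of the product}. The new parameter is $k+1$. The work space is $\mathbb Q^D$ with $D = d+h+(k+2)$, split into a work block $W$ of dimension $d$, an output block $O$ of dimension $h$, and a block of state coordinates $s_0,\dots,s_k,s_{\mathrm{dead}}$. The initial vector $\mathbf v'$ carries $\mathbf v$ in $W$, zeros in $O$, and the one-hot token $s_0=1$. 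Each $T_i$ is lifted to $T_i'$ that acts as $T_i$ on $W$, as the identity on $O$, and advances the token (from $s_j$ to $s_{j+1}$ for $j<k$, from $s_k$ to $s_{\mathrm{dead}}$, and fixing $s_{\mathrm{dead}}$). The matrix $T_S$ adds $SW$ to $O$, clears $W$, and on the state block consumes the token only when it sits in $s_k$ (acceptance), while sending a token in any of $s_0,\dots,s_{k-1}$ to $s_{\mathrm{dead}}$. Finally the new instance declares $S' = I_D$.

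The key idea is that the state coordinates run a deterministic automaton, encoded by $0/1$ transition matrices, that accepts exactly those words consisting of $k$ ``original'' letters followed by a single $T_S$. Because the token is a single one-hot vector at every step, each application either moves it to another live state, traps it forever in $s_{\mathrm{dead}}$, or consumes it; in particular the whole state block is zero at the end if and only if the token was consumed, which happens if and only if the applied sequence has the form $T_S T_{i_k}\cdots T_{i_1}$ with all $T_{i_1},\dots,T_{i_k}$ original. For such a sequence the final vector equals $(\mathbf 0,\, S T_{i_k}\cdots T_{i_1}\mathbf v,\, \mathbf 0)$, which is $\mathbf 0$ exactly when $S T_{i_k}\cdots T_{i_1}\mathbf v = \mathbf 0$. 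This establishes a bijection between length-$k$ vanishing sequences of the original instance and length-$(k+1)$ vanishing sequences of the new one, so $M_{k+1}' = M_k$; in particular $M_{k+1}'>0 \iff M_k>0$, giving the decision equivalence (and in fact a parsimonious reduction). The reduction runs in time polynomial in the input and in $k$, and the new parameter is $k+1$, so it is a valid parameterized reduction.

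The main obstacle, which this construction is designed to overcome, is forcing $T_S$ to occur exactly once and in the last position while keeping the final vector \emph{exactly} zero on a valid run. A naive ``flag'' recording that $T_S$ has already fired cannot work: such a flag would have to be nonzero immediately after $T_S$ (to poison any subsequent original matrix) yet zero at the very end of an accepting run, which itself ends with $T_S$; and simply letting $T_S$ fire at an arbitrary position produces spurious vanishing outputs corresponding to shorter products $S T_{i_j}\cdots T_{i_1}\mathbf v = \mathbf 0$. Representing progress as a single one-hot token that is irrevocably trapped on any illegal move and is \emph{consumed} (rather than merely flagged) on the unique legal completion resolves both difficulties simultaneously, since ``accepted'' is encoded by the \emph{absence} of the token, i.e.\ by the zero vector.
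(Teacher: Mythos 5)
Your proposal is correct, and the hard direction is sound: the one-hot token analysis does force $T_S$ to occur exactly once and in the last position, giving $M'_{k+1}=M_k$ exactly, and the trivial direction is handled as in the paper. Your gadget, however, is genuinely different from the paper's. The paper first pads $S$ to a square matrix and then adds only \emph{two} extra coordinates, independent of $k$: coordinate $d+1$ is a numeric countdown counter initialized to the value $k$, and coordinate $d+2$ is held constant at $1$; each lifted $T'_i$ subtracts the constant coordinate from the counter, while the matrix $S'$ (inserted into the list of ordinary matrices) multiplies the counter by $10$ and zeroes the constant coordinate. The forcing argument is then arithmetic rather than automata-theoretic: if $S'$ is applied while the counter is still positive, the counter is frozen at a nonzero value because the constant coordinate has been destroyed, and if $S'$ is never applied the constant coordinate itself survives; hence any vanishing product must consist of exactly $k$ original matrices followed by a single $S'$. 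Your unary-token construction realizes the same forcing but with $k+2$ extra coordinates, so the dimension of the produced instance grows with the parameter --- still a perfectly valid parameterized reduction (time polynomial in $n$ and $k$, new parameter $k+1$), just less economical than the paper's constant-size gadget. In exchange, your version avoids the squaring/padding step entirely (the output block natively has $h$ coordinates) and is manifestly parsimonious, so it transfers verbatim to the counting setting; the paper's reduction happens to be parsimonious as well, though this is not remarked upon there.
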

\begin{proof}
  One direction is trivial since the case when $S = I$ is just a special case of \probVEST.
   
  Regarding the other, let $\left(S \in \mathbb{Q}^{h \times d}, T_1, T_2, \ldots, T_m \in \mathbb{Q}^{d\times d}, \mathbf{v} \in \mathbb{Q}^d, k\right)$
  be an instance of \probVEST.
  First, we observe that without loss of generality we can suppose that $S$ is a square matrix (in other words, $h=d$).
  Indeed, if $h < d$ then we just add $d-h$ zero lines to $S$.  If $h > d$ we add $h-d$ zero columns to $S$,
		  $h-d$ zero entries to $\mathbf{v}$ and $h-d$ zero lines as well as $h-d$ zero columns to each $T_i$.
		  See Figure~\ref{f:square_S}.
		  		    
		  \begin{figure}
		  \begin{align*}
		  \renewcommand{\arraystretch}{1.2}
        \left(
        \begin{array}{ c c c c | c c c }
          \multicolumn{1}{|c}{} & & & & 0 & \hdots & \mc{0} \\
          \multicolumn{4}{|c|}{\raisebox{0\normalbaselineskip}[0pt][0pt]{$S$}} & \vdots & \ddots & \mc{\vdots} \\
          \multicolumn{1}{|c}{} & & & & 0 & \hdots & \mc{0}
        \end{array}
        \right) \in \mathbb{Q}^{h \times h}, 
        \left(
        \begin{array}{c c c | c c c }
          \multicolumn{1}{|c}{} & & & 0 & \hdots & 0 \\
          \multicolumn{3}{|c|}{\raisebox{0\normalbaselineskip}[0pt][0pt]{$T_i$}} & \vdots & \ddots & \vdots \\
          \multicolumn{1}{|c}{} & & & 0 & \hdots & 0 \\
          \cline{1-6}  
          0 & \hdots & 0 & 0 & \hdots & 0 \\
          \vdots & \ddots & \vdots & \vdots & \ddots & \vdots \\
          0 & \hdots & 0 & 0 & \hdots & 0
        \end{array}
        \right) \in \mathbb{Q}^{h \times h},
        \left(
        \begin{array}{ c }
          \mathbf{v} \\
          \cline{1-1}
          0 \\
          \vdots\\
          0
        \end{array}
        \right) \in \mathbb{Q}^{h}.
		  \end{align*}
		    \caption{A figure showing how to make all matrices square in the proof of Theorem~\ref{thm:without_S_equivalence}
		    when $h > d$.}\label{f:square_S}
		  \end{figure}
		  
    Now, we add 2 dimensions:
		To the vector $\mathbf{v}$ 
		we add $k$ on the $(d+1)$-st position and $1$ on the $(d+2)$-nd position.		
		To each matrix matrix $T_i$ we add a $2 \times 2$ submatrix which subtracts
		the $(d+2)$-nd component of a vector from the $(d+1)$-st.		
		To the matrix $S$ we add a submatrix which nullifies the $(d+2)$-nd component and multiplies the $(d+1)$-th component by 10.		
		Let $S^\prime, T^\prime_1, T^\prime_2 \ldots, T^\prime_m$
		denote the resulting $(d+2) \times (d+2)$ matrices and $\mathbf{v}^\prime$ denote the resulting $(d+2)$-dimensional vector.
		See Figure~\ref{f:without_S_equivalence}.
		The new parameter is set to $k+1$.
		
  	\begin{figure}
		  \begin{align*}
		    \renewcommand{\arraystretch}{1.2}
		    \mathbf{v}^\prime=
        \left(
        \begin{array}{ c }
          \mathbf{v} \\
          \cline{1-1}
          k \\
          1
        \end{array}
        \right),
		    S^\prime =
        \left(
        \begin{array}{c c c | c c }
          \multicolumn{1}{|c}{} & & & 0 & 0 \\
          \multicolumn{3}{|c|}{\raisebox{0\normalbaselineskip}[0pt][0pt]{$S$}} & \vdots & \vdots \\
          \multicolumn{1}{|c}{} & & & 0 & 0 \\
          \cline{1-5}  
          0 & \hdots & 0 & 10 & 0 \\
          0 & \hdots & 0 &  0 & 0
        \end{array}
        \right),
		    T^\prime_i=
        \left(
        \begin{array}{c c c | c c }
          \multicolumn{1}{|c}{} & & & 0 & 0 \\
          \multicolumn{3}{|c|}{\raisebox{0\normalbaselineskip}[0pt][0pt]{$T_i$}} & \vdots & \vdots \\
          \multicolumn{1}{|c}{} & & & 0 & 0 \\
          \cline{1-5}  
          0 & \hdots & 0 &  1 & -1 \\
          0 & \hdots & 0 &  0 &  1
        \end{array}
        \right).
		  \end{align*}
		    \caption{A construction forcing the matrix $S'$ to be selected last in the proof of Theorem~\ref{thm:without_S_equivalence}.}
		    \label{f:without_S_equivalence}
		  \end{figure}	
		  
		If there is a solution of the original problem, that is, there are $k$ matrices $T_{i_1}, \ldots, T_{i_k}$
		such that $ST_{i_k} T_{i_{k-1}} \cdots T_{i_1}\mathbf{v} = \mathbf{0}$,
		then $S^\prime T_{i_k}^\prime T_{i_{k-1}}^\prime \cdots T_{i_1}^\prime\mathbf{v}^\prime = \mathbf{0}$,
		since 1 is $k$ times subtracted from the $(d+1)$-st component of $\mathbf{v'}$ and the $(d+2)$-nd component is then nullified by $S^\prime$.
		
		Conversely, if there are $k+1$ matrices $Y_{1}, Y_{2}, \ldots, Y_{k+1}$,
		where each $Y_{i}$ is either $S^\prime$ or $T^\prime_{j}$
		for some $j$, such that $\mathbf{r} = Y_{k+1} Y_{k} \cdots Y_{1}\mathbf{v}^\prime =  \mathbf{0}$
		then $Y_{k+1}$ must be equal to $S^\prime$
		and the rest of the matrices are of type $T^\prime_j$, otherwise $\mathbf{r}_{d+1} \neq 0$ or $\mathbf{r}_{d+2} \neq 0$.
		Indeed, at first $k$ matrices of type $T^\prime_j$ must be selected to nullify the $(d+1)$-st component: if $Y_i=S^\prime$ for some $i\leq k$, this would increase the non-zero $(d+1)$-st component, so there would be no way to nullify it by remaining matrices $Y_{i+1}, \ldots, Y_{k+1}$. At the same time, $S^\prime$ should be necessarily selected once to nullify the $(d+2)$-nd component, so $Y_{k+1}=S^\prime$. Therefore, by restricting the matrices $Y_{1}, \ldots, Y_{k}$ to the first $d$ coordinates we obtain a solution to \probVEST with matrix $S$.
\end{proof}

\begin{theorem}\label{thm:equivalence_matrix_k_product_vest}
	\probVEST and \probMkProdZero are equivalent under parameterized reduction.
\end{theorem}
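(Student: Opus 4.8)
The plan is to give two parameterized reductions. The easy direction is \probMkProdZero to \probVEST. Given matrices $B_1,\dots,B_m \in \mathbb{Q}^{d\times d}$ with parameter $k$, I replicate each $B_i$ into a block-diagonal matrix $\tilde B_i \in \mathbb{Q}^{d^2\times d^2}$ consisting of $d$ copies of $B_i$ along the diagonal, take $\mathbf{v} = (\mathbf{e}_1;\mathbf{e}_2;\dots;\mathbf{e}_d) \in \mathbb{Q}^{d^2}$ (the stacked columns of the identity), and set $S = I_{d^2}$. Reading the product blockwise, $\tilde B_{i_k}\cdots\tilde B_{i_1}\mathbf{v}$ is exactly the concatenation of the columns of $P := B_{i_k}\cdots B_{i_1}$, so it vanishes if and only if $P = \mathbf{0}$. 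Keeping the parameter equal to $k$, this is an $\FPT$ reduction; since it is a bijection on index sequences, it also preserves the count $M_k$ and hence transfers both the decision and the counting versions.

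For the converse, \probVEST to \probMkProdZero, I first invoke Theorem~\ref{thm:without_S_equivalence} to assume $S = I_d$, so that the task becomes deciding whether some product $T_{i_k}\cdots T_{i_1}$ sends $\mathbf{v}$ to $\mathbf{0}$. The key algebraic observation is to append a rank-one \emph{source} matrix $R = \mathbf{v}\,\boldsymbol{\phi}^{\mathsf T}$ as the rightmost factor: for any $P$ one has $PR = (P\mathbf{v})\boldsymbol{\phi}^{\mathsf T}$, which is the zero matrix precisely when $P\mathbf{v} = \mathbf{0}$. Thus, if every vanishing product could be forced into the shape $T_{i_k}\cdots T_{i_1}R$ with $R$ occurring exactly once and in the rightmost position, then a vanishing product of the appropriate length would correspond exactly to a \probVEST solution.

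To enforce this shape I augment every matrix with gadgets on fresh coordinates. First, a nilpotent ``clock'' block of size $k$ that each $T_i'$ advances by one step and that $R$ initializes at the bottom; since the clock reaches $\mathbf{0}$ only after exactly $k$ advances, a vanishing product is compelled to contain exactly $k$ factors of type $T_i'$ acting after $R$. Second, a dedicated coordinate that every $T_i'$ preserves but that is neutralized in the intended terminal configuration, whose sole role is to make any product omitting $R$ nonzero, thereby forcing $R$ to appear at least once. The new parameter is $k+1$ (or $k+2$ once a terminator matrix is added), the construction is polynomial in the input and uses the parameter $k$ only to size the clock, so the reduction is $\FPT$.

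The hard part will be ruling out \emph{degenerate} vanishing products in which $R$ appears more than once or not at the rightmost position. Because $R$ has rank one, such configurations tend to vanish for spurious reasons: a later occurrence of $R$ reading a coordinate that has already been zeroed collapses the entire product (identities of the form $RMR = \mathbf{0}$), and a careless terminator $L$ can even produce $RL = \mathbf{0}$. I will therefore choose the functional $\boldsymbol{\phi}$ and, if a terminator $L$ is needed, arrange $\boldsymbol{\phi}^{\mathsf T}L \neq \mathbf{0}$, and then reason about a \emph{minimal-length} vanishing product exactly as in the proof of Theorem~\ref{kProd0MatrHard}: minimality together with the idempotence of the gadget matrices should forbid consecutive or misplaced special factors and pin the product down to the intended form $T_{i_k}\cdots T_{i_1}R$. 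Verifying that this case analysis leaves only the intended products is the technical heart of the argument; once it is in place, it yields the claimed equivalence, and — being bijective on the surviving sequences — it carries over to the counting versions as well.
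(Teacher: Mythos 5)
Your first direction (block-diagonal replication of each input matrix, $\mathbf{v}$ the stacked columns of the identity, $S=I_{d^2}$) is exactly the paper's reduction and is complete.

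The second direction, however, contains a genuine gap: everything after the rank-one observation $PR=(P\mathbf{v})\boldsymbol{\phi}^{\mathsf T}$ is a plan rather than a proof. You never define the clock block, the functional $\boldsymbol{\phi}$, or the terminator $L$ concretely, and you explicitly defer the case analysis that rules out degenerate vanishing products --- but that analysis \emph{is} the content of this direction, as you yourself note by calling it the ``technical heart.'' Moreover, the gadget as sketched does not yet enforce what you need. A nilpotent clock of size $k$ is annihilated by \emph{any} $k$ advances, so the clock coordinates also vanish in products where $R$ is leftmost (the clock word is $EN^{k}=0$) or where $R$ is absent ($N^{k+1}=0$); your ``dedicated coordinate'' handles the absent case, but in the ``$R$ leftmost'' case the whole product equals $\mathbf{v}\,(\boldsymbol{\phi}^{\mathsf T}P_1)$ and you must guarantee $\boldsymbol{\phi}^{\mathsf T}P_1\neq 0$ for \emph{every} product $P_1$ of augmented input matrices, which requires extra structure you never impose (e.g.\ that each augmented $T_i$ fixes some coordinate in the support of $\boldsymbol{\phi}$). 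Likewise $RMR=(\boldsymbol{\phi}^{\mathsf T}M\mathbf{v})\,\mathbf{v}\boldsymbol{\phi}^{\mathsf T}$ can vanish for reasons unrelated to any \probVEST solution, so repeated occurrences of $R$ need their own argument; appealing to ``minimality plus idempotence'' is not enough, since $R$ is not idempotent in general ($R^2=(\boldsymbol{\phi}^{\mathsf T}\mathbf{v})R$).

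For comparison, the paper closes exactly these holes by reusing machinery it has already verified: after normalizing $S$ to the identity via Theorem~\ref{thm:without_S_equivalence} (as you do), it encodes $\mathbf{v}$ as a matrix $T_{\mathbf{v}}$ with $\mathbf{v}$ in the first column, and attaches to every matrix the $2\times 2$ blocks $X$, $U_{-2}$, $U_{2k+1}$ from the proof of Theorem~\ref{kProd0MatrHard}; that earlier analysis forces $X$ to occur exactly twice, at the two ends of the product, with exactly $k$ copies of $U_{-2}$ and one copy of $U_{2k+1}$ in between, which pins down both the multiplicities and (together with a pair of auxiliary blocks $A,B$ satisfying $AB=0$ but $BA,A^2,B^2\neq 0$) the positions of the two special matrices. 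Your clock-based approach could plausibly be completed along the lines above, but as written the decisive verification is missing, and it is precisely the part that cannot be settled with ``should.''
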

Note that one implication is relatively straightforward. Regarding the other, the idea is to again ``simulate'' the special matrix $S$ and the vector $\mathbf{v}$
by an ordinary matrix and force them to be selected as the leftmost and the rightmost, respectively. For the complete proof please see~\ref{sec:appendix}.

\section{Fixed-Parameter Tractability of \probVEST over Finite Fields}
While most of the hardness results for \probVEST and its variations in the previous section use constant-sized matrices, the entries of this matrices can be arbitrarily large. Here, we study the variation of the problem when all the matrices have entries from some finite field. Notice that restricting the field size by itself does not make the problem tractable:
recall the reduction from dominating set from Section~\ref{s:W[2]_hardness} which also works over $\Z_2$.
However, along with a bound on the matrix sizes this makes the problem tractable.
\begin{lemma}
Computing $M_k$ for a given instance of \probVEST over finite field $\F$ is $\FPT$ when parameterized by the size of $\F$ and the size of matrices.
\label{thm: fin_field_sizes}
\end{lemma}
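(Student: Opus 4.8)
The plan is to exploit the fact that the two parameters together bound the number of distinct vectors that can ever arise during the matrix--vector products, and then to count length-$k$ sequences by a dynamic program over this bounded state space. Write $q = |\F|$ and let $d$ be the common dimension of the matrices $T_i$. Every intermediate vector $T_{i_j}\cdots T_{i_1}\mathbf{v}$ lies in $\F^d$, a set of size $q^d$, which is a function of the parameters alone. The key reformulation is to view $M_k$ as a count of length-$k$ walks in a weighted transition structure on the vertex set $\F^d$: from a vector $\mathbf{u}$ one may move to each vector $T_i\mathbf{u}$, one move per index $i\in\{1,\dots,m\}$, counted with multiplicity. Formally, I would introduce the aggregate transition matrix $A\in\N^{\F^d\times\F^d}$ with $A_{\mathbf{w},\mathbf{u}} = |\{i\in\{1,\dots,m\} : T_i\mathbf{u}=\mathbf{w}\}|$.

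With this in hand, let $N_j(\mathbf{w})$ denote the number of index sequences $(i_1,\dots,i_j)$ with $T_{i_j}\cdots T_{i_1}\mathbf{v}=\mathbf{w}$. Then $N_0$ is the indicator vector of $\mathbf{v}$, and a direct induction on the length of the product gives the recurrence $N_{j+1}=A\,N_j$; equivalently $N_k = A^k N_0$. Finally $M_k=\sum_{\mathbf{w}\,:\,S\mathbf{w}=\mathbf{0}} N_k(\mathbf{w})$, so computing $M_k$ reduces to applying the bounded-dimensional operator $A$ a total of $k$ times (or to $\lceil\log_2 k\rceil$ squarings of $A$) and summing the resulting counters over the target set $\{\mathbf{w}:S\mathbf{w}=\mathbf{0}\}$. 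Note that the number of rows $h$ of $S$ need not be a parameter: membership in the target set is decided by a single evaluation $S\mathbf{w}$ per vector, which is polynomial in the input.

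For the running time, enumerating $\F^d$ and, for each vector, evaluating all $m$ images $T_i\mathbf{u}$ builds $A$ in time $O(m\,q^d\,d^2)$; each application of $A$ costs $O(q^{2d})$ operations on the counters, and there are $k$ of them. Every factor $q^d$ and $q^{2d}$ is a function of the parameters only, while $m$, $d$ and $k$ are polynomial in the input size, so the whole computation is $\FPT$ in $q$ and $d$.

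The one point requiring care --- and the only place I expect a genuine obstacle --- is the magnitude of the counters $N_j(\mathbf{w})$, which can be as large as $m^k$ and hence carry $\Theta(k\log m)$ bits. As long as $k$ is polynomially bounded in the input length (as it is in the topological motivation, where $k$ is given in unary), each big-integer addition and multiplication stays polynomial and the $\FPT$ bound is unaffected; I would state this assumption explicitly. The same finite transition structure also underlies the promised extension to finding the minimal $k$: since $A$ acts on the fixed state set $\F^d$, the support $\operatorname{supp}(N_j)$ of vectors reachable by walks of length exactly $j$ evolves by a deterministic map on the finite powerset $2^{\F^d}$ and is therefore eventually periodic, with preperiod and period bounded by a function of $q^d$. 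Simulating this support until it repeats, and testing at each step whether it meets the target set, then locates the first $j$ with $M_j>0$ or certifies that none exists, all within a parameter-bounded number of steps.
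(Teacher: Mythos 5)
Your proof is correct, and it rests on the same underlying technique as the paper's --- a dynamic program that counts length-$j$ sequences aggregated by a parameter-bounded statistic --- but your choice of statistic is genuinely different. The paper tracks, for every matrix $X \in \F^{d \times d}$ and every $i \le k$, the number $a_X^i$ of length-$i$ sequences whose \emph{matrix product} equals $X$, a state space of size $|\F|^{d^2}$, and at the end sums $a_X^k$ over those $X$ with $SX\mathbf{v}=\mathbf{0}$. You instead aggregate by the \emph{image vector} $T_{i_j}\cdots T_{i_1}\mathbf{v} \in \F^d$, a state space of size only $|\F|^{d}$. Since the acceptance condition $ST_{i_k}\cdots T_{i_1}\mathbf{v}=\mathbf{0}$ depends on a sequence only through this image vector, your coarser aggregation (a projection of the paper's) loses nothing, and it buys a smaller exponent: $|\F|^{2d}$ per step instead of roughly $|\F|^{d^2}\cdot m$. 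Incidentally, the vector-valued state would also have sufficed for the paper's later generalization (Theorem~\ref{thm:FPT_result}): after one multiplication by a matrix whose nonzero entries lie in the first $p$ rows, the intermediate vector is supported on its first $p$ coordinates, so the state space is again parameter-bounded. Two of your side remarks deserve note. First, the bit-length caveat is legitimate and is silently glossed over in the paper: the counters can reach $m^k$, so one needs $k$ given in unary (the standing convention for \probVEST, cf.\ the paper's footnote on Anick's result) for the arithmetic to remain polynomial --- the paper's $a_X^i$ have exactly the same issue. Second, your eventual-periodicity argument on the powerset $2^{\F^d}$ for locating the minimal $k$ with $M_k>0$ matches, in substance, the paper's closing remark after Theorem~\ref{thm:FPT_result}, which argues stabilization of the set of achievable products rather than of reachable vectors.
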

\begin{proof}
Let $\mathcal M_{\F}^d$ be the set of all $d \times d$ matrices with entries from $\F$, then $|\mathcal M_{\F}^d|=|\F|^{d^2}$. For every $X \in \mathcal M_{\F}^d$ and every integer $i \in [k]$ we will compute a value $a_X^i \in \N_0$ equal to the number of sequences of $i$ matrices from the input such that their product is equal to $X$. In particular, this allows to obtain $M_k=\sum_{X\in \mathcal M_{\F}^d:\:SX\mathbf{v}=\mathbf{0}}a_X^k$.

For $i=1$ the computation can be done simply by traversing the input matrices. Assume that $a_X^i$ have been computed for all the matrices $X$ and all $i\in[j]$. We initiate by setting $a_X^{j+1}=0$ for every $X \in \mathcal M_{\F}^d$. Then, for every pair $(X,q)$, where $X \in \mathcal M_{\F}^d$ and $q\in[m]$, we increment $a_{XT_q}^{j+1}$ by $a_X^j$. In the end we will then have a correctly computed value $a_{Y}^{j+1}=\sum_{q=1}^{m}\sum_{X\::\:XT_q=Y} a_X^j$. 
\end{proof}

Our next step is to consider the matrices of unbounded size, but with at most $p$ first rows containing non-zero entries. In particular, if $\F=\Z_2$, we can associate to every such matrix $T$ a graph with the vertex set $[d]$ such that there exists an edge between the vertices $i$ and $j$, $i \le j$, if and only if $T^{i,j}=1$. Conversely, a graph with the vertex set $[d]$ can be represented by such a matrix if and only if the vertices in $[p]$ form it's vertex cover. 

Observe that every matrix $T$ with at most $p$ first non-zero rows has the following form:
        \begin{align*}
        T=\left(
        \begin{array}{c | c c c }
          A & & B &  \\
          \cline{1-4}  
          0 & 0 & \hdots & 0 \\
          \vdots & \vdots & \ddots & \vdots \\
          0 & 0 & \hdots & 0 \\
        \end{array}
        \right),  \text{ where $A$ is $p \times p$ matrix and $B$ is $p \times (d-p)$ matrix}.
        \end{align*}
 Further, we will denote matrices of this form by $A|B$.
 Consider the product of two such matrices $T_1=A_1|B_1$ and $T_2=A_2|B_2$:
 \begin{align*}
        \left(
        \begin{array}{c | c c c }
          A_1 & & B_1 &  \\
          \cline{1-4}  
          0 & 0 & \hdots & 0 \\
          \vdots & \vdots & \ddots  & \vdots \\
          0 & 0 & \hdots & 0 \\
        \end{array}
        \right)   
        \left(
        \begin{array}{c | c c c }
          A_2 & & B_2 &  \\
          \cline{1-4}  
          0 & 0 & \hdots & 0 \\
          \vdots & \vdots & \ddots  & \vdots \\
          0 & 0 & \hdots & 0 \\
        \end{array}
        \right)  
=\left(
        \begin{array}{c | c c c }
          A_1 A_2 & \multicolumn{3}{c}{A_1 B_2}  \\
          \cline{1-4}  
          0 & 0 & \hdots & 0 \\
          \vdots & \vdots & \ddots  & \vdots \\
          0 & 0 & \hdots & 0 \\
        \end{array}
        \right) = (A_1A_2)|(A_1B_2).
        \end{align*}
\begin{corollary}
\label{cor: decomp}
$\prod_{i=1}^k (A_i|B_i)= (\prod_{i=1}^k A_i)|(\prod_{i=1}^{k-1} A_i \cdot B_k)=(X A_k)|(X B_k) $, where\\ $X=\prod_{i=1}^{k-1}A_i$.
 In particular, the product does not depend on $B_i$ for $i<k$.
\end{corollary}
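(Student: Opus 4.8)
The plan is to prove the identity by induction on $k$, using the two-matrix product formula derived immediately above the statement as the only computational ingredient. The base case $k=1$ is a matter of unwinding the (empty) product conventions: with $X=\prod_{i=1}^{0}A_i = I_p$ we have $(X A_1)|(X B_1) = A_1|B_1$, which is exactly $\prod_{i=1}^{1}(A_i|B_i)$. It is worth fixing at the outset that empty products of the $p\times p$ blocks are read as the identity matrix $I_p$, since this is what makes the $k=1$ instance of the formula typecheck.

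For the inductive step, I would suppose the claim holds for $k-1$, so that $\prod_{i=1}^{k-1}(A_i|B_i)$ is itself a matrix of the form $Y|Z$ whose left block is $Y=\prod_{i=1}^{k-1}A_i$ (the precise value of the right block $Z$ will turn out to be irrelevant). I would then apply the two-matrix identity $(Y|Z)(A_k|B_k)=(Y A_k)|(Y B_k)$ to the factorization $\prod_{i=1}^{k}(A_i|B_i) = \big(\prod_{i=1}^{k-1}(A_i|B_i)\big)(A_k|B_k)$. This yields left block $Y A_k = \prod_{i=1}^{k}A_i$ and right block $Y B_k = \big(\prod_{i=1}^{k-1}A_i\big)B_k$, which is precisely $(X A_k)|(X B_k)$ with $X=\prod_{i=1}^{k-1}A_i$, as claimed.

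The single conceptual point that makes this work — and which I expect to be the only thing worth stating explicitly rather than computing — is that in the product $(Y|Z)(A_k|B_k)$ the resulting right block $Y B_k$ does not involve $Z$ at all. Consequently the blocks $B_1,\dots,B_{k-1}$ never propagate through the induction: only the last block $B_k$ survives, multiplied on the left by the accumulated product $X$ of all earlier top-left blocks. This is exactly the final ``in particular'' assertion that the product is independent of $B_i$ for $i<k$, so it requires no separate argument once the inductive identity is in hand. There is no real obstacle here; the only care needed is the empty-product bookkeeping in the base case.
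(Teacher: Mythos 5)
Your proof is correct and matches the paper's (implicit) argument: the paper states this as an immediate corollary of the displayed two-matrix product $(A_1|B_1)(A_2|B_2)=(A_1A_2)|(A_1B_2)$, and your induction on $k$, with the observation that the right block $Z$ of the partial product never enters the next step, is exactly the natural formalization of that iteration. The only addition you make is the explicit empty-product convention for $k=1$, which is fine.
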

\thmFPT*
\begin{proof}
We slightly modify the definition of $a_X^i \in \N_0$ from the proof of Theorem~\ref{thm: fin_field_sizes}. Now, for every $i \in [k]$ and every matrix $X \in \mathcal M_{\F}^p$, let $a_X^i$ be the number of sequences of $i$ matrices $T_j=A_j|B_j$ from the input such that corresponding product of $A_j$ is equal to $X$.

The values of $a_X^i$ for every $i \in [k-1]$ can be computed same as in the proof of Theorem~\ref{thm: fin_field_sizes}. Given this information, we can count the sequences of length $k$ that nullify $\mathbf{v}$. Indeed, by Corollary \ref{cor: decomp}, the number of such sequences with the last matrix $T_j=A_j | B_j$ is precisely $b_j=\sum_{X \in \mathcal M_{\F}^p: \:S\cdot (XA_j)|(XB_j) \cdot \mathbf{v}=\mathbf{0}} a_X^{k-1}$. $M_k$ is then equal to $\sum_{j=1}^m b_j$.
\end{proof}

We remark that the algorithm for computing $a_X^i$ from the last proof can be exploited to determine minimal $k$ such that $M_k>0$, or to report that there is no such $k$. For this, let us run the algorithm with $k=1$, then with $k=2$ and so on. If after some iteration $k=j+1$ we obtain that $M_i=0$ for all $i \in [j]$ and there is no $X \in \mathcal M_{\F}^p$ such that $a_X^1=\dots=a_X^{j}=0$ and $a_X^{j+1}\neq 0$, we may conclude that $M_k=0$ for all $k \in N$, since every product of length more than $j$ can be obtained as a product of length at most $j$, and none of the latter nulify $\mathbf{v}$. Otherwise, there exists at least one $X\in \mathcal M_{\F}^p$ such that $a_X^1=\dots=a_X^{j}=0$ and $a_X^{j+1}\neq 0$. Note that every $X\in \mathcal M_{\F}^p$ can play this role only for one value of $k$. Therefore, it always suffices to make $|\mathcal M_{\F}^p|$ iterations of the algorithm. 

\section{Undecidability of \probVEST}\label{s:undecidability}
\newcommand{\probPost}{\textsc{Post's Correspondence Problem}\xspace}
In this section, we show that determining whether there exists $k \in \N$ such that $M_k > 0$ for an instance of \probVEST is an undecidable problem.
The reduction is from \probPost which is known to be undecidable. See \cite{Post47}.
\pbDef{(Binary) \probPost}
{$m$ pairs $(v_1,w_2), (v_2,w_2), \ldots, (v_m,w_m)$ of words over alphabet $\{0,1\}$.}
{Is possible to choose $k$ pairs $(v_{i_1}, w_{i_k}), (v_{i_2}, w_{i_2}) \ldots, (v_{i_k}, w_{i_k})$, for some $k \in \N$,
such that $v_{i_1}v_{i_2}\cdots v_{i_k} = w_{i_1}w_{i_2}\cdots w_{i_k}?$}

For a word $v \in \{0,1\}^\ast$ let $|v|$ be its length and let $(v)_2$ be the integer value of $v$ interpreting it as a binary number.
Let us define the following matrix for a binary word $v$.
\begin{align*}
T_v = 
  \begin{matrix}
    \begin{pmatrix}
    2^{|v|} - (v)_2 & (v)_2 \\
    2^{|v|} - (v)_2-1 & (v)_2+1 \\
  \end{pmatrix}
  \end{matrix}, \text{ then the following holds:}
\end{align*}
\begin{lemma}\label{lem:concatination_matrix}
	Let $v,w$ be binary words. Then, $T_v T_w = T_{wv}$ where $wv$ is the concatenation of $w$ and $v$.
\end{lemma}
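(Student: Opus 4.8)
The plan is to reduce the identity to a short arithmetic computation, the crux of which is the behaviour of the length and of the integer value under concatenation. First I would record two structural facts that make the whole statement transparent: for any binary word $u$, the bottom row of $T_u$ equals its top row plus the fixed row vector $(-1,1)$; and this same vector is a left fixed vector of every such matrix, i.e.\ $(-1,1)\,T_u=(-1,1)$. Both are one-line checks directly from the definition of $T_u$. Together they reduce the proof to computing only the \emph{top} row of the product $T_v T_w$, since the bottom row is then forced.

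Next I would introduce the shorthand $a=|v|$, $x=(v)_2$, $b=|w|$, $y=(w)_2$, and state the elementary concatenation identity $|wv|=a+b$ together with $(wv)_2 = y\,2^{a}+x$, which holds because in $wv$ the word $w$ occupies the high-order positions and $v$ the low-order ones. With this, the target matrix $T_{wv}$ can be written out explicitly in terms of $a,b,x,y$ for comparison.

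Then I would multiply the top row $(2^{a}-x,\;x)$ of $T_v$ into $T_w$. Both resulting entries collapse after grouping the $(2^{a}-x)$ and $x$ terms: the coefficient of $(2^{b}-y)$, respectively of $y$, becomes $2^{a}$, leaving the top row $\bigl(2^{a}(2^{b}-y)-x,\;2^{a}y+x\bigr)=\bigl(2^{a+b}-(wv)_2,\;(wv)_2\bigr)$, which is exactly the top row of $T_{wv}$. Finally, writing the bottom row of $T_v$ as its top row plus $(-1,1)$ and invoking the left-fixed-vector property shows that the bottom row of $T_v T_w$ equals the top row of the product plus $(-1,1)$, namely $\bigl(2^{a+b}-(wv)_2-1,\;(wv)_2+1\bigr)$ --- precisely the bottom row of $T_{wv}$. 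Matching both rows completes the proof.

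There is no serious obstacle here beyond bookkeeping; the only point one must get right is the \emph{direction} of the value identity under concatenation, i.e.\ that $w$ --- the left factor in $T_v T_w$ --- contributes the high-order bits $y\,2^{a}$, since reversing this would break the cancellations. One could equally verify all four entries by brute force, but the left-eigenvector observation makes the computation essentially immediate and, as a bonus, explains \emph{why} the product is again a matrix of the form $T_{(\cdot)}$.
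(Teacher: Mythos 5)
Your proof is correct, and it takes a genuinely different route from the paper's. The paper argues by brute force: it records the identity $2^{|v|}(w)_2+(v)_2=(wv)_2$ and then expands and simplifies all four entries of $T_vT_w$ one by one. You use the same concatenation identity but compute only the top row of the product; the two structural facts you isolate --- that the bottom row of every $T_u$ equals its top row plus $(-1,1)$, and that $(-1,1)T_u=(-1,1)$ --- are both correct one-line checks, and together they do force the bottom row of $T_vT_w$ to be its top row plus $(-1,1)$, which is exactly the bottom row of $T_{wv}$. Your top-row computation is also right: grouping the coefficients as you describe gives $\bigl(2^{|v|}(2^{|w|}-(w)_2)-(v)_2,\ 2^{|v|}(w)_2+(v)_2\bigr)=\bigl(2^{|wv|}-(wv)_2,\ (wv)_2\bigr)$. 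What your approach buys is half the bookkeeping and a conceptual explanation of why matrices of the form $T_{(\cdot)}$ are closed under multiplication (they sit inside the set of matrices fixing $(-1,1)$ on the left whose two rows differ by $(-1,1)$); what the paper's approach buys is that it needs no preliminary observations at all, just patience. One wording slip in your closing remarks: $w$ corresponds to the \emph{right} matrix factor $T_w$ in $T_vT_w$ (it is the left word in the concatenation $wv$); since your actual computations use the correct convention $(wv)_2=(w)_2\,2^{|v|}+(v)_2$ throughout, nothing in the argument breaks.
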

Note that the construction of $T_v$ is a based on \cite{Claus71}[Satz 28, p. 157] which we are aware of thanks to G{\"u}nter Rote.
For the complete proof of Lemma~\ref{lem:concatination_matrix} please see \ref{sec:appendix}.

\paragraph*{Reduction}
Given an instance of \probPost we describe what an instance of \probVEST may look like.
For each pair $(v,w)$ we define

\begin{align*}
T_{(v,w)} = 
  \renewcommand{\arraystretch}{1.2}
  \left(
  \begin{array}{ c c | c c }
    \multicolumn{1}{|c}{} & & 0 & 0 \\
    \multicolumn{2}{|c|}{\raisebox{.6\normalbaselineskip}[0pt][0pt]{$T_{v}$}} & 0 & 0 \\
    \cline{1-4}
    0 & 0 & & \multicolumn{1}{c|}{}\\
    0 & 0 & \multicolumn{2}{c|}{\raisebox{.6\normalbaselineskip}[0pt][0pt]{$T_{w}$}} \\
  \end{array}
  \right),
\end{align*}
we set the initial vector $\mathbf{v} := \left(0,1,0,1\right)^T$ and $S :=\left(1,0,-1,0\right)$.
The undecidability of \probVEST then follows from the following lemma.
\begin{lemma}
	Let $(v_{i_1}, w_{i_1}), (v_{i_2}, w_{i_2}) \ldots, (v_{i_k}, w_{i_k})$ be $k$ pairs of binary words. Then 
	\begin{align*}
		S T_{(v_{i_k}, w_{i_k})}T_{(v_{i_{k-1}}, w_{i_{k-1}})} \cdots T_{(v_{i_1}, w_{i_1})} \mathbf{v} = \mathbf{0}	
	\end{align*}
	if and only if $v_{i_1}v_{i_2}\cdots v_{i_k} = w_{i_1}w_{i_2}\cdots w_{i_k}$.
\end{lemma}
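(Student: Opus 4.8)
The plan is to exploit the block-diagonal shape of each $T_{(v,w)}$, which carries $T_v$ in its top-left $2\times 2$ block and $T_w$ in its bottom-right block, with the off-diagonal blocks zero. Since block-diagonal matrices multiply blockwise, the product $P := T_{(v_{i_k}, w_{i_k})}\cdots T_{(v_{i_1}, w_{i_1})}$ is again block diagonal, with top-left block $T_{v_{i_k}}\cdots T_{v_{i_1}}$ and bottom-right block $T_{w_{i_k}}\cdots T_{w_{i_1}}$. First I would invoke Lemma~\ref{lem:concatination_matrix} repeatedly: writing the identity as $T_aT_b = T_{ba}$ and iterating from the left, the top-left block collapses to $T_V$ with $V := v_{i_1}v_{i_2}\cdots v_{i_k}$ and the bottom-right block to $T_W$ with $W := w_{i_1}w_{i_2}\cdots w_{i_k}$. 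One should note that the reversal in the lemma is exactly compensated by the reversed indexing in $P$, so the concatenation order comes out in the order claimed by the statement.

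Next comes a short direct computation on the initial vector. As $\mathbf{v} = (0,1,0,1)^T$ restricts to $(0,1)^T$ in each block, and the second column of $T_X$ is $((X)_2, (X)_2+1)^T$, we obtain $T_X(0,1)^T = ((X)_2, (X)_2+1)^T$. Hence $P\mathbf{v} = ((V)_2, (V)_2+1, (W)_2, (W)_2+1)^T$, and applying $S = (1,0,-1,0)$ extracts the single scalar $(V)_2 - (W)_2$. Therefore the vanishing condition $SP\mathbf{v} = \mathbf{0}$ is equivalent to $(V)_2 = (W)_2$, i.e. to $V$ and $W$ representing the same nonnegative integer in binary.

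The step I expect to be the main obstacle is the final one: passing from equality of \emph{values} $(V)_2 = (W)_2$ to equality of \emph{words} $V = W$. These are not equivalent in general, since $(\cdot)_2$ is blind to leading zeros (for instance $(01)_2 = (1)_2$). I would close this gap by reducing from a restricted but still undecidable form of \probPost in which every $v_i$ and $w_i$ begins with the letter $1$. This restriction loses no generality: applying the injective block code $0\mapsto 10$, $1\mapsto 11$ to all words yields an equivalent \probPost instance (the induced map on concatenations is injective and commutes with concatenation) in which every word begins with $1$. Under this assumption every concatenation $V$ and $W$ also begins with $1$, and since leading-$1$ binary strings are in bijection with the positive integers, $(V)_2 = (W)_2$ forces $|V| = |W|$ and then, bit by bit, $V = W$. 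Combining this with the computation above gives the desired equivalence $SP\mathbf{v} = \mathbf{0} \iff V = W$, which is precisely the statement and which turns the whole construction into a valid reduction certifying undecidability of \probVEST.
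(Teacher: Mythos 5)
Your block-diagonal computation is precisely the paper's own proof: by Lemma~\ref{lem:concatination_matrix} the product collapses to $T_{(V,W)}$ with $V = v_{i_1}v_{i_2}\cdots v_{i_k}$ and $W = w_{i_1}w_{i_2}\cdots w_{i_k}$, the initial vector $\mathbf{v}=(0,1,0,1)^T$ selects the second columns, giving $\bigl((V)_2,\,(V)_2+1,\,(W)_2,\,(W)_2+1\bigr)^T$, and $S$ extracts the scalar $(V)_2-(W)_2$. The paper stops exactly there, implicitly treating the vanishing of $(V)_2-(W)_2$ as equivalent to the word equality $V=W$.

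The obstacle you single out is therefore not a pedantic worry but a genuine gap in the paper's argument: the lemma as literally stated is false. For the single pair $(v,w)=(0,00)$ and $k=1$ one has $S\,T_{(0,00)}\mathbf{v} = (0)_2-(00)_2 = 0$, yet $0 \neq 00$ as words; worse, the instance $\{(0,00)\}$ is a negative instance of Post's Correspondence Problem (the two concatenations always differ in length) that the paper's reduction maps to a positive instance of \probVEST, so the reduction taken verbatim is unsound. Your repair is the right one: pass first to the restriction of Post's Correspondence Problem in which every word begins with $1$, which is without loss of generality via the injective, concatenation-compatible code $0\mapsto 10$, $1\mapsto 11$; for leading-$1$ words the value $(V)_2$ determines $|V|$ (since $2^{|V|-1}\le (V)_2 < 2^{|V|}$) and hence the word itself, so $(V)_2=(W)_2$ does force $V=W$. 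Strictly speaking, what you prove is the lemma under the added hypothesis that all $v_i$, $w_i$ start with $1$ (equivalently, you change the source problem of the reduction) rather than the statement for arbitrary binary words---no proof of the latter can exist---but that corrected version is exactly what the undecidability theorem needs. In short, your proposal is not only correct in its shared computational core; it supplies a step that the paper's own proof is missing.
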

\begin{proof}
	By Lemma~\ref{lem:concatination_matrix}	$T_{(v_{i_k}, w_{i_k})}T_{(v_{i_{k-1}}, w_{i_{k-1}})} \cdots T_{(v_{i_1}, w_{i_1})}
		= T_{(v_{i_1}v_{i_2}\cdots v_{i_k}, w_{i_1}w_{i_2}\cdots w_{i_k})}$.
The vector $\mathbf{v}$ selects the second column of the submatrix $T_{v_{i_1}v_{i_2}\cdots v_{i_k}}$
and the second column of the submatrix $T_{w_{i_1}w_{i_2}\cdots w_{i_k}}$. In other words,
the result is equal to
\begin{align*}
\left( \left( v_{i_1}v_{i_2}\cdots v_{i_k}\right)_2, \left( v_{i_1}v_{i_2}\cdots v_{i_k}\right)_2 + 1, \left( w_{i_1}w_{i_2}\cdots w_{i_k}\right)_2,
\left( w_{i_1}w_{i_2}\cdots w_{i_k}\right)_2 + 1     \right)^T.
\end{align*}
The final result after multiplying $S$ with the vector above is the following 1-dimensional vector
$\left( v_{i_1}v_{i_2}\cdots v_{i_k}\right)_2 - \left( w_{i_1}w_{i_2}\cdots w_{i_k}\right)_2$.
\end{proof}

\bibliographystyle{alpha}

\appendix
\section{Appendix} \label{sec:appendix}
Our aim in the appendix is to give a complete proof of
\begin{itemize}
	\item $\W[1]$-hardnes of \probAtmkSumRep (Theorem~\ref{thm:at_most_k_sum_with_repetition_hardness}),
	\item Theorem~\ref{thm:k_product_2x2_with_v_hard},
	\item Theorem~\ref{thm:equivalence_matrix_k_product_vest},
	\item two auxiliary lemmas. Namely, Lemma~\ref{lem:n_th_prime} and Lemma~\ref{lem:concatination_matrix}.
\end{itemize}

\begin{theorem} \label{thm:at_most_k_sum_with_repetition_hardness}
    \probAtmkSumRep is $\W[1]$-hard when parameterized by $k$. 
    \label{l:ksum-rep-hard}
\end{theorem}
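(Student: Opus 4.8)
The plan is to give a parameterized reduction from \probkEC, which is $\W[1]$-hard, to \probAtmkSumRep. Let $(U,\mathcal{C},k)$ be an instance with $U=\{u_1,\dots,u_n\}$; I may assume every $C\in\mathcal{C}$ is nonempty, since empty sets are useless for a partition and can be discarded. I encode sets as integers written in a sufficiently large base $B:=(k+1)(n+1)+1$, reserving one digit position $B^{i}$ for each element $u_i$ and one extra high position $B^{n+1}$ acting as a counter. To each $C\in\mathcal{C}$ I associate
\[
  n_C := B^{n+1} + \sum_{u_i\in C} B^{i},
\]
so the counter digit records that one set was used while the element digits record membership. In addition I introduce a single large negative offset
\[
  s := 1 - k\,B^{n+1} - \sum_{i=1}^{n} B^{i}.
\]
The produced instance of \probAtmkSumRep is the set $A:=\{n_C : C\in\mathcal{C}\}\cup\{s\}$ with parameter $k':=k+1$.

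For the forward direction, if $C_1,\dots,C_k$ partition $U$ then in $\sum_{i=1}^{k} n_{C_i}$ every element digit equals $1$ and the counter digit equals $k$, so this sum equals $k\,B^{n+1}+\sum_{i=1}^{n}B^{i}$; adding $s$ once yields total $1$ using $k+1\le k'$ numbers.

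The converse is the technical heart. Given a multiset of at most $k+1$ elements of $A$ that sums to $1$, I first argue that $s$ is used \emph{exactly} once. Since $B>k+1$, every positive $n_C$ satisfies $n_C\ge B^{n+1}>1$, so $s$ must appear at least once; and a magnitude estimate on the coefficient of $B^{n+1}$ (using $B>(k+1)n$, which guarantees that the combined element digits never reach $B^{n+1}$) shows that using $s$ twice would force strictly more than $k+1$ sets, exceeding the budget. With $s$ used exactly once, the constraint reduces to $\sum_i n_{C_i}=k\,B^{n+1}+\sum_{i=1}^{n}B^{i}$; since $B$ is large enough that no carries arise (all digit sums are at most $k+1<B$), uniqueness of base-$B$ representation forces exactly $k$ chosen sets with every element covered exactly once, i.e.\ a partition of $U$ into $k$ members of $\mathcal{C}$. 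Repetitions are automatically excluded here, as a repeated set would push some element digit to at least $2$. Finally, the integers involved have bit-length $O(n\log(nk))$, polynomial in the input size, and the parameter grows only to $k+1$, so this is a valid \FPT reduction.

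I expect the main obstacle to be precisely this second step: ruling out spurious solutions that invoke the offset $s$ more than once (or that overshoot the budget through repeated sets) by means of the carry-free base-$B$ bookkeeping, while simultaneously pinning the number of selected sets to exactly $k$ through the dedicated counter position.
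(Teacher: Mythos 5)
Your proof is correct and takes essentially the same route as the paper's: a parameterized reduction from \probkEC in which each set is encoded as a base-$B$ number with one digit per universe element plus a dedicated counter digit, a single offset number is forced to appear exactly once, and the new parameter is $k+1$. The only differences are cosmetic — the paper flips the signs (sets become negative numbers, the offset $y$ positive), uses the smaller base $x=k+2$, and pins down the unique use of the offset via a mod-$x$ argument on the units digit rather than your magnitude estimate.
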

\begin{proof}
    Consider an instance $(U, \mathcal{C}, k)$ of \probUHS.
    Intuitively, we would like to model the sets in $\mathcal{C}$ as their characteristic vectors over $|U|$ dimensions,
    where each dimension corresponds to an element from $U$,
    and the vector representing a set $C \in \mathcal{C}$ is set to one exactly in those dimensions
    which correspond to the elements contained in the set which is represented by the vector.
    To model this in an instance of \probAtmkSumRep, we will represent said characteristic vectors as numbers in base $(k + 2)$.

    Formally, let $m = |U|$, $U = \{u_1, \ldots, u_{m}\}$, and $x = k + 2$. For each $C \in \mathcal{C}$,
    we add an element $a_C = -\left( x^{m+1} + \sum_{j ; u_j \in C} x^{j} \right)$ to the set $A$ of numbers.
    Then we also add to $A$ the number $y := kx^{m+1} + \sum_{j = 0}^{m} x^{j}$
    and we set the new parameter to $k+1$.
    Note that the numbers in $A$ are bounded by $x^{m+2}$,
    thus can be represented by $O(m \log k)$ bits, and $|A| = |\mathcal{C}|$,
    meaning that the reduction can be done in polynomial time.
    It remains to verify that the produced instance of \probAtmkSumRep is equivalent to the original instance of \probkEC.

    First, let $C_1, \ldots, C_k \in U$ be a solution to \probkEC.
    We claim that $\{y, a_{C_1}, \ldots, a_{C_k}\} \subset A$ is a solution to the instance $(A, k+1)$ of \probAtmkSumRep.
    Indeed, by construction and since each element of $U$ is covered exactly once, we have
    $a_{C_1} + \cdots + a_{C_k} =- \left(kx^{m+1} + \sum_{j=1}^m x^j\right) = -y+ x^0 = -y + 1$.
    Therefore, $y + a_{C_1} + \cdots + a_{C_k} = 1$.
    
    In the other direction, consider a solution $a_1, \dots, a_t \in A$ to \probAtmkSumRep where $t \leq k+1$.
    First of all, we observe that $y$ must be chosen precisely once.
    The sum $\sum_{j=1}^t a_t = 1 = x^0$ and $y$ is the only number with a coefficient (= 1) of $x^0$.
    Therefore, $y$ can be chosen $(\ell x + 1)$ times where $\ell \in \N_0$.
    However $t < x = k+2$. Whence, $\ell = 0$.
    In other words, $y$ is chosen precisely once and
    without loss of generality, we suppose that $a_1 = y$.
    
    Next, we show that $t=k+1$.
    The number $y$ which is chosen precisely once has $k$ as the coefficient of $x^{m+1}$ which has to be nullified.
    The only option how to do that is to choose $k$ numbers other than $y$.
    (Such numbers are negative and have $1$ as a coefficient of $x^{m+1}$.)
        
    Finally, from the equality $\sum_{j=2}^{k+1} a_j = - y + 1 = -kx^{m+1} - \sum_{j=1}^m x^j$ we conclude
    that no $-x^i$ for $i \leq m$ is contained in more than one $a_j$ as a summand since $k< k+2 = x$.
    By the same argument we observe that each $-x^i$ is contained in some $a_j$ as a summand.
    Indeed, addition of at most $k$ terms $-x^i$ cannot affect coefficient of $x^{i+1}$.
    Therefore, each $-x^i$ for $i \leq m$ is contained in precisely one $a_j$ and thus,
    $\{C; a_C \in \{a_2, \ldots, a_{k+1}\}\}$ is a desired $k$-exact cover.   
\end{proof}

\begin{proof}[Proof of Theorem~\ref{thm:k_product_2x2_with_v_hard}]
As in the proof of Theorem~\ref{kProd0MatrHard}, we proceed by reduction from \probAtmkSumRep. 
Let $\mathcal{I}$ be an arbitrary instance of the problem with the set of integers $A$ 
and parameter $k$.
We create an equivalent instance $\mathcal{I}^\prime$ of \probVEST with parameter $k+1$, vector $v=(0,1)^T$
and the set of matrices $\{U_a:a\in A\}\cup\{X\}$, where $U_a$ and $X$ are defined same as in the proof of Theorem \ref{kProd0MatrHard}.
We set $S$ equal to the identity matrix. 

For correctness, assume that $\mathcal{I}$ is a YES-instance and $a_1, \ldots, a_\ell \in A$ are such that $\ell\le k$ and $\sum_{i=1}^\ell a_i=1$.
We apply the following $\ell+1$ matrices to nullify $v$:
 \begin{align*}
    X\cdot \prod_{i=1}^\ell U_{a_i} \cdot v= XU_1v=
    \renewcommand{\arraystretch}{1.2}
        \left(
          \begin{array}{ c c }
            0 & 0 \\
           -1 & 1
          \end{array}
        \right)
 \left(
          \begin{array}{ c c }
            1 & 1 \\
            0 & 1
          \end{array}
        \right)
 \left(
          \begin{array}{ c c }
            0 \\
           1
          \end{array}
        \right)=
\left(
          \begin{array}{ c c }
           0  \\
           0
          \end{array}
        \right).
  \end{align*}
For another direction, assume that  $\mathcal{I}^\prime$ is a YES-instance.
Let $\ell$, $1\le \ell\le {k+1}$, be the minimal integer such that $T_\ell \cdots T_1v = (0,0)^T$
for some $T_1,\ldots, T_\ell$ from $\{U_a:a\in A\}\cup\{X\}$.
Since the determinants of all $U_a$ are non-zero, $T_i=X$ for some $i\in[\ell]$.
Observe that $Xv=v$, so by minimality of $\ell$ we have that $T_1\ne X$.
Let $i$ be the minimal index such that $T_i=X$, $2\le i \le \ell$.
Then $T_{i-1} \cdots T_1=U_s$ for some integer $s$. Let us apply first $i$ matrices to $v$:
 \begin{align*}
    T_i \cdots T_1v = XU_s\cdot v &=
    \renewcommand{\arraystretch}{1.2}
 \left(
          \begin{array}{ c c }
            0 & 0 \\
           -1 & 1
          \end{array}
        \right)
 \left(
          \begin{array}{ c c }
            1 & s \\
            0 & 1
          \end{array}
        \right)
\left(
          \begin{array}{ c c }
           0 \\
           1 
          \end{array}
        \right)
        \\&=
 \left(
          \begin{array}{ c c }
            0 & 0 \\
            -1 & 1
          \end{array}
        \right)
\left(
          \begin{array}{ c c }
           s \\
           1 
          \end{array}
        \right)=
\left(
          \begin{array}{ c c }
            0 \\
            1-s
          \end{array}
        \right).
  \end{align*} 
If $s\ne 1$, we get a multiple of $v$, which is in contradiction to minimality of $l$.
So $T_{i-1},\ldots,T_1=U_1$, which is a product of at most $k$ matrices of the form $U_a$ with $a\in A$.
The sum of corresponding indices $a$ is then equal to $1$, resulting in a solution to $\mathcal{I}$.
\end{proof}

\begin{proof}[Proof of Theorem~\ref{thm:equivalence_matrix_k_product_vest}]\hfill
	\begin{enumerate}
		\item ``Parameterized reduction from \probMkProdZero to \probVEST''\newline
		For each matrix $T_i \in \mathbb{Q}^{d \times d}$ 
		we introduce a block matrix $T^\prime_i \in \mathbb{Q}^{d^2 \times d^2}$ whose each block is $T_i$.
		We set $\mathbf{v} = (e_1, e_2, \ldots, e_d)^T \in \mathbb{Q}^{d^2}$ where each $e_i$ is the $d$-dimensional
		unit vector with 1 on its $i$-th coordinate and $S$ to the $d^2$-dimensional identity matrix.
    Therefore, $T_{i_k}T_{i_{k-1}}\cdots T_{i_1} = R$ if and only if
    $ST^\prime_{i_k} \cdots T^\prime_{i_1}\mathbf{v} = (R_{\ast,1}, R_{\ast,2}, \ldots, R_{\ast,d})^T$
    where $R_{\ast, j}$ is the $j$-th column of the matrix $R$.		
		
		\item ``Parameterized reduction from \probVEST to \probMkProdZero''\newline
		We first reduce \probVEST to the version of \probVEST without $S$ as we did in the proof of Theorem~\ref{thm:without_S_equivalence}.
		Thus, we assume that our input consists of the initial vector $\mathbf{v}$, square matrices $S^\prime, T^\prime_{1}, \ldots, T^\prime_{m}$,
		where $S^\prime$ represents the original special matrix $S$, and the parameter is $k+1$.
		Let us recall that $S^\prime$ has to be selected precisely once as the leftmost matrix otherwise the resulting vector cannot be zero
		by the construction from the proof of Theorem~\ref{thm:without_S_equivalence}.
		
		Now, we create an instance of \textsc{Matrix $(k+3)$-Product with Repetitions}.
		Let $T_\mathbf{v}$ be a matrix containing the vector $\mathbf{v}$ in the first column and zero otherwise.
		The idea is to use the matrix $T_\mathbf{v}$ instead of the vector $\mathbf{v}$ and force such matrix
		to be selected as the rightmost after $S^\prime$ and $k$ matrices of type $T^\prime_{i}$ by adding some blocks.		
    We use the construction from the proof of Theorem~\ref{kProd0MatrHard}. Namely, we use matrices $X$ and
    $U_{-2}$ and $U_{2k+1}$ as submatrices. By the same argument as in the proof of Theorem~\ref{kProd0MatrHard}
    the only way how to make the zero matrix by multiplying $k+3$ matrices from $\{X, U_{-2}, U_{2k+1}\}$ is to choose $X$ twice,
    as the leftmost and the rightmost matrix, $k$-times $U_{-2}$ and once $U_{2k+1}$ as intermediate matrices.
    Therefore, we can add $X$ to $T_\mathbf{v}$ and to the identity matrix as block submatrices,
    $U_{2k+1}$ to $S^\prime$ (since $S^\prime$ must be selected precisely once) and $U_{-2}$ to $T_i^\prime$.
    It remains to force the order of $T_\mathbf{v}$ and the identity matrix enriched by $X$. For this, we add
    submatrices $A, B$ such that $AB = 0$ while $BA \neq 0, AA \neq 0, BB \neq 0$. We add $A$ to the identity matrix
    enriched by $X$, $B$ to $T_\mathbf{v}$ enriched by $X$ and identity matrices to the rest. See Figure~\ref{f:without_S_v_equivalence}. 
    The following settings for $A$ and $B$, respectively, work.
    		\begin{align*}
		  A =
		  \renewcommand{\arraystretch}{1.2} 
        \left(
        \begin{array}{ c c c }
          0 & 0 & 0 \\
          0 & 1 & 0 \\
          0 & 0 & 0
        \end{array}
        \right),
        B =
		    \renewcommand{\arraystretch}{1.2} 
        \left(
        \begin{array}{ c c c }
          0 & 1 & 0 \\
          0 & 0 & 0 \\
          0 & 0 & 1
        \end{array}
        \right).
		\end{align*}
    \begin{figure}
		  \begin{align*}
        T_i^{\prime\prime} &=
        \renewcommand{\arraystretch}{1.2} 
        \left(
        \begin{array}{ c c c | c c c | c c }
          \multicolumn{1}{|c}{} & & & 0 & 0 & \mc{0} & 0 & 0 \\
          \multicolumn{3}{|c|}{\raisebox{0\normalbaselineskip}[0pt][0pt]{$T^\prime_i$}} & \vdots & \vdots & \mc{\vdots} & \vdots & \vdots \\
          \multicolumn{1}{|c}{} & & & 0 & 0 & \mc{0} & 0 & 0 \\
          \cline{1-6}
          0 & \hdots & 0 & \multicolumn{1}{c}{} & & & 0 & 0 \\
          0 & \hdots & 0 & \multicolumn{3}{c|}{\raisebox{0\normalbaselineskip}[0pt][0pt]{$I_3$}} & 0 & 0 \\
          0 & \hdots & 0 & \multicolumn{1}{c}{} & & & 0 & 0 \\
          \cline{4-8}
          0 & \hdots & \mc{0} & 0 & 0 & 0 & & \multicolumn{1}{c|}{}   \\
          0 & \hdots & \mc{0} & 0 & 0 & 0 & \multicolumn{2}{c|}{\raisebox{.6\normalbaselineskip}[0pt][0pt]{$U_{-2}$}}
        \end{array}
        \right),
        S^{\prime\prime} =
        \left(
        \begin{array}{ c c c | c c c | c c }
          \multicolumn{1}{|c}{} & & & 0 & 0 & \mc{0} & 0 & 0 \\
          \multicolumn{3}{|c|}{\raisebox{0\normalbaselineskip}[0pt][0pt]{$S^\prime$}} & \vdots & \vdots & \mc{\vdots} & \vdots & \vdots \\
          \multicolumn{1}{|c}{} & & & 0 & 0 & \mc{0} & 0 & 0 \\
          \cline{1-6}
          0 & \hdots & 0 & \multicolumn{1}{c}{} & & & 0 & 0 \\
          0 & \hdots & 0 & \multicolumn{3}{c|}{\raisebox{0\normalbaselineskip}[0pt][0pt]{$I_3$}} & 0 & 0 \\
          0 & \hdots & 0 & \multicolumn{1}{c}{} & & & 0 & 0 \\
          \cline{4-8}
          0 & \hdots & \mc{0} & 0 & 0 & 0 & & \multicolumn{1}{c|}{}   \\
          0 & \hdots & \mc{0} & 0 & 0 & 0 & \multicolumn{2}{c|}{\raisebox{.6\normalbaselineskip}[0pt][0pt]{$U_{2k+1}$}}
        \end{array}
        \right),
        \\
        T_\mathbf{v}^\prime &=
        \renewcommand{\arraystretch}{1.2} 
        \left(
        \begin{array}{ c c c | c c c | c c }
          \multicolumn{1}{|c}{} & & & 0 & 0 & \mc{0} & 0 & 0 \\
          \multicolumn{3}{|c|}{\raisebox{0\normalbaselineskip}[0pt][0pt]{$T_\mathbf{v}$}} & \vdots & \vdots & \mc{\vdots} & \vdots & \vdots \\
          \multicolumn{1}{|c}{} & & & 0 & 0 & \mc{0} & 0 & 0 \\
          \cline{1-6}
          0 & \hdots & 0 & \multicolumn{1}{c}{} & & & 0 & 0 \\
          0 & \hdots & 0 & \multicolumn{3}{c|}{\raisebox{0\normalbaselineskip}[0pt][0pt]{$B$}} & 0 & 0 \\
          0 & \hdots & 0 & \multicolumn{1}{c}{} & & & 0 & 0 \\
          \cline{4-8}
          0 & \hdots & \mc{0} & 0 & 0 & 0 & & \multicolumn{1}{c|}{}   \\
          0 & \hdots & \mc{0} & 0 & 0 & 0 & \multicolumn{2}{c|}{\raisebox{.6\normalbaselineskip}[0pt][0pt]{$X$}}
        \end{array}
        \right),
        H =
        \left(
        \begin{array}{ c c c | c c c | c c }
          \multicolumn{1}{|c}{} & & & 0 & 0 & \mc{0} & 0 & 0 \\
          \multicolumn{3}{|c|}{\raisebox{0\normalbaselineskip}[0pt][0pt]{$I_d$}} & \vdots & \vdots & \mc{\vdots} & \vdots & \vdots \\
          \multicolumn{1}{|c}{} & & & 0 & 0 & \mc{0} & 0 & 0 \\
          \cline{1-6}
          0 & \hdots & 0 & \multicolumn{1}{c}{} & & & 0 & 0 \\
          0 & \hdots & 0 & \multicolumn{3}{c|}{\raisebox{0\normalbaselineskip}[0pt][0pt]{$A$}} & 0 & 0 \\
          0 & \hdots & 0 & \multicolumn{1}{c}{} & & & 0 & 0 \\
          \cline{4-8}
          0 & \hdots & \mc{0} & 0 & 0 & 0 & & \multicolumn{1}{c|}{}   \\
          0 & \hdots & \mc{0} & 0 & 0 & 0 & \multicolumn{2}{c|}{\raisebox{.6\normalbaselineskip}[0pt][0pt]{$X$}}
        \end{array}
        \right).
      \end{align*}
      \caption{The instance of \probMkProdZero obtained after the reduction from \probVEST
      	in the proof of Theorem~\ref{thm:equivalence_matrix_k_product_vest}.}\label{f:without_S_v_equivalence}
    \end{figure}	
		
	\end{enumerate}
\end{proof}

\begin{lemma}\label{lem:n_th_prime}
  Let $p_n$ denote the $n$-th prime. Then $p_n \leq n^2$ for $n \geq 2$.
\end{lemma}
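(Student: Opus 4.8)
The plan is to prove the equivalent statement $\pi(n^2) \ge n$ for all $n \ge 2$, where $\pi$ denotes the prime-counting function; since $p_n \le x$ holds exactly when $\pi(x) \ge n$, the inequality $\pi(n^2) \ge n$ is precisely $p_n \le n^2$. I would first record why the most obvious tool, Bertrand's postulate, is \emph{not} sufficient on its own: it only guarantees a prime in $(p_n, 2p_n)$, hence $p_{n+1} < 2 p_n$ and the far too weak bound $p_n < 2^n$. The issue is that the interval Bertrand controls, $(n^2, 2n^2]$, overshoots the window $(n^2, (n+1)^2] = (n^2, n^2 + 2n + 1]$ that a naive induction would need. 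What is actually required is an upper bound of the correct order $p_n = O(n \log n)$, which then sits comfortably below $n^2$.

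The cleanest route I would take is to invoke an explicit Rosser-type upper bound for the $n$-th prime, namely $p_n \le n(\ln n + \ln\ln n)$ for all $n \ge 6$ (see \cite{Barkley41}). For every such $n$ one trivially has $\ln n + \ln\ln n < n$, so this yields $p_n < n \cdot n = n^2$ at once, and the finitely many remaining cases are dispatched by direct inspection:
\begin{align*}
p_2 = 3 \le 4, \qquad p_3 = 5 \le 9, \qquad p_4 = 7 \le 16, \qquad p_5 = 11 \le 25.
\end{align*}
This already closes the argument.

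Should a self-contained proof be preferred, avoiding any appeal to analytic number theory, I would instead establish a Chebyshev-type lower bound $\pi(x) \ge c\,x/\ln x$ with an explicit constant $c$, for instance via the standard size and divisibility estimates for the central binomial coefficient $\binom{2N}{N}$. One then observes that $c\,x/\ln x \ge \sqrt{x}$ as soon as $c\sqrt{x} \ge \ln x$, and applies this at $x = n^2$ to conclude $\pi(n^2) \ge n$ for all $n$ past some explicit threshold $n_0$; the cases $2 \le n < n_0$ are verified by hand.

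The main obstacle here is bookkeeping rather than conceptual: in the self-contained route one must pin down the explicit threshold $n_0$ beyond which $c\sqrt{x} \ge \ln x$ and then check every smaller value, whereas in the citation route one only needs a bound with an explicitly stated range of validity plus the handful of base cases above. In both cases the quantitative heart of the matter is the elementary estimate $\ln n + \ln\ln n \le n$ (equivalently $c\,x/\ln x \ge \sqrt{x}$), which is exactly what makes the order-$n\log n$ growth of $p_n$ fall well within the quadratic bound.
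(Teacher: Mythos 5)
Your main route is essentially the paper's proof: invoke Rosser's explicit upper bound $p_n < n(\ln n + \ln\ln n)$, observe it is dominated by $n^2$, and check $p_2=3$, $p_3=5$, $p_4=7$, $p_5=11$ by hand. There is, however, one concrete inaccuracy worth flagging: you cite \cite{Barkley41} for the bound ``for all $n \ge 6$'', but Theorem~28 of that 1941 paper establishes it only in the range $6 \le n \le e^{95}$ (the unconditional statement for all $n \ge 6$ comes from later work of Rosser and Schoenfeld). As written, your argument leaves the range $n > e^{95}$ uncovered by the reference you appeal to. The paper closes exactly this hole with a second ingredient from the same source, Theorem~29.A: $\pi(x) \ge x/(\ln x + 2)$ for $x \ge 55$, which at $x = n^2$ gives
\begin{align*}
  \pi(n^2) \;\ge\; \frac{n^2}{2\ln n + 2} \;\ge\; n \qquad \text{once } n \ge 2\ln n + 2,
\end{align*}
i.e.\ $p_n \le n^2$ for all sufficiently large $n$ regardless of the $e^{95}$ cutoff. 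So either add this second bound (as the paper does) or cite a reference that actually states the Rosser-type bound for all $n \ge 6$. Your alternative Chebyshev-type route via $\binom{2N}{N}$ would also work and would be self-contained, but as you note it requires pinning down an explicit constant and threshold, which you have only sketched.
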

\begin{proof}
Let $\pi(x)$ denote the number of primes less than or equal to $x$. The lemma follows, e.g., from the following claims:
\begin{itemize}
  \item $p_n < n \left( \ln n + \ln \ln n \right)$ for $6 \leq n \leq e^{95}$ (see~\cite[Theorem~28]{Barkley41}),
  \item $\frac{x}{\ln x + 2} \leq \pi(x)$ for $x \geq 55$ (see~\cite[Theorem~29.A]{Barkley41}),
  \item $p_2 = 3$, $p_3 = 5$, $p_4 = 7$, $p_5 = 11$.
\end{itemize}
\end{proof}

\begin{proof}[Proof of Lemma~\ref{lem:concatination_matrix}.]
	First of all, note that $2^{|v|}(w)_2 + (v)_2 = (wv)_2$. Using this observation, we compute all the entries
	of the matrix
	\begin{align*}
		 T_v T_w = 
		 \begin{matrix}
    \begin{pmatrix}
    2^{|v|} - (v)_2 & (v)_2 \\
    2^{|v|} - (v)_2-1 & (v)_2+1 \\
  \end{pmatrix}
  \end{matrix}
  \begin{matrix}
    \begin{pmatrix}
    2^{|w|} - (w)_2 & (w)_2 \\
    2^{|w|} - (w)_2-1 & (w)_2+1 \\
  \end{pmatrix}
  \end{matrix}.
  \end{align*}
  Thus,
  \begin{align*}
		\left(T_{v}T_{w}\right)^{1,1}=& \left( 2^{|v|} - (v)_2 \right) \left( 2^{|w|} - (w)_2 \right) + (v)_2 \left( 2^{|w|} - (v)_2-1  \right)\\
		=& 2^{|wv|} - 2^{|v|}(w)_2 - 2^{|w|}(v)_2 + (v)_2(w)_2 + 2^{|w|}(v)_2 - (v)_2(w)_2 - (v)_2\\
		=& 2^{|wv|} - 2^{|v|}(w)_2 - (v)_2 \\
		=& 2^{|wv|} - (wv)_2,\\
		\left(T_{v}T_{w}\right)^{1,2} =& \left( 2^{|v|} - (v)_2 \right) (w)_2 +  (v)_2 \left((w)_2+1\right)\\
		=& 2^{|v|}(w)_2 - (v)_2 (w)_2 + (v)_2 (w)_2 + (v)_2\\
		=& 2^{|v|}(w)_2 + (v)_2\\
		=& (wv)_2,\\
		\left(T_{v}T_{w}\right)^{2,1} =& \left( 2^{|v|} - (v)_2-1 \right) \left( 2^{|w|} - (w)_2 \right) + \left( (v)_2+1 \right) \left(  2^{|w|} - (w)_2-1 \right)\\
		=& 2^{|wv|} - 2^{|v|}(w)_2 - 2^{|w|}(v)_2 +(v)_2(w)_2 - 2^{|w|} + (w)_2 \\
		&+ 2^{|w|}(v)_2 - (v)_2 (w)_2 - (v)_2 +2^{|w|} - (w)_2 - 1\\
		=& 2^{|wv|}  - 2^{|v|}(w)_2 - (v)_2 - 1 \\
		=& 2^{|wv|} - (wv)_2 - 1,\\
		\left(T_{v}T_{w}\right)^{2,2} =& \left( 2^{|v|} - (v)_2-1 \right) (w)_2 + \left( (v)_2+1 \right) \left( (w)_2+1 \right)\\
		=& 2^{|v|}(w)_2 - (v)_2(w)_2 - (w)_2 + (v)_2(w)_2 + (v)_2 + (w)_2 + 1\\
		=& 2^{|v|}(w)_2 + (v)_2 + 1\\
		=& (wv)_2 + 1.
	\end{align*}
\end{proof}

\end{document}